\newcommand{\ra}{\Rightarrow}
\newcommand{\ras}{\overset{\ast}{\Rightarrow}}
\theoremstyle{plain}
\newtheorem{theorem}{Theorem}[section]
\newtheorem{lemma}[theorem]{Lemma}
\newtheorem{corollary}[theorem]{Corollary}
\newtheorem{example}[theorem]{Example}
\newtheorem*{corollary*}{Corollary}{\bf}{\it}
\begin{document}


\title{Asymptotic distribution of motifs in a stochastic context-free grammar model of RNA folding}
\author{
Svetlana Poznanovi\'c, Christine E. Heitsch, 
 \vspace{.2cm} \\
School of Mathematics\\
Georgia Institute of Technology } 
\date{}
\maketitle
\begin{abstract}
We analyze the distribution of RNA secondary structures given by the Knudsen-Hein stochastic context-free grammar used in the prediction program Pfold. We prove that the distribution of base pairs, helices and various types of loops in RNA secondary structures in this probabilistic model is  asymptotically Gaussian, for a generic choice of the grammar probabilities. Our proofs are based on singularity analysis of probability generating functions. Finally, we use our results to discuss how this model reflects  the properties of some known ribosomal secondary structures.
\end{abstract}



{\renewcommand{\thefootnote}{}
\footnote{\emph{E-mail addresses}: svetlana@math.gatech.edu (Svetlana Poznanovi\'{c}), 
heitsch@math.gatech.edu (Christine E. Heitsch)} 
\footnotetext[1]{This work was supported by a BWF CASI grant to CEH. CEH was also supported in part by NIH NIGMS R01 GM08361.}




\section{Introduction} Knowing the base pairings  of an RNA sequence can reveal important information about the molecule's function but, unfortunately, experimental determination  of the secondary structure is too often nontrivial. For this reason, computational methods  have become a standard approach to RNA secondary structure prediction. Most of these prediction methods  are based on energy minimization~\citep{MT} and depend on the model for the folding free energy change.  In order to increase the prediction accuracy, the thermodynamic model has been refined over the years with the inclusion of hundreds of different parameters, most of them experimentally determined~\citep{NNDB}.  Alas, the prediction accuracy still varies widely~\citep{DCCG}. As an alternative, methods that use stochastic context-free grammars (SCFGs) have been developed~\citep{ED, SBHMSUH}. An advantage of these methods is that they can be augmented by phylogenetic and experimental information. One example of such a prediction program is Pfold~\citep{Pfold}.

When developing a prediction method based on a SCFG, several choices need to be made including the SCFG to be used and the set of probabilities for the grammar rules.
\citet{DE} performed an evaluation of the performance of several SCFGs in the prediction of secondary structures. The Knudsen-Hein grammar~\citep{SCFG} used in Pfold~\citep{Pfold} was found to be the most accurate one with prediction accuracy comparable to  energy minimization programs, while being significantly simpler than the other SCFGs tested. The authors of~\citep{DE} conclude that ``after exploring various alternative SCFG designs, we confirm that the Knudsen/Hein grammar is an excellent, simple framework in which to develop some probabilistic RNA analysis methods''. However,  the Sensitivity and the Positive Predicted Value of such predictions are still below 50\% for a lot of sequences, as given in~\citet[Table 3]{DE}.  In order to understand the potential for increasing prediction accuracy by changing the probabilities, in this work we analyze the probability distribution of RNA secondary structures generated by this grammar.

 There are two types of probability parameters: transmission probabilities, which are used for generation of the base pairs in the secondary structure, and emission probabilities which are used to generate the underlying sequence of nucleotides. The grammar defines a probability measure on the set of all pairs $(str, seq)$ of secondary structures $str$ and RNA sequences $seq$ of same length.  Then the basic way to predict a structure for a given RNA sequence $seq$ is to use the Cocke-Younger-Kasami (CYK) algorithm~\citep[see][]{Durbin} to compute the most probable pair $(str, seq)$.  
 
 The goal of this paper is to help clarify the effects of changing the probability parameters for the Knudsen-Hein SCFG. Specifically, we prove that the distributions of many biologically relevant motifs (helices, hairpins, multi-branch loops, etc.) are asymptotically Gaussian for almost all choices of the transmission probabilities for the grammar rules. In addition, we compute the asymptotic means and standard deviations as a function of these probabilities. A significant consequence of these results are relations between these distributions which are not affected by the change of the parameters (Corollary~\ref{col}). These relations are observed for the predicted structures of the ribosomal sequences (Section~\ref{discussion}) but do not hold for the native ribosomal structures. Consequently, the accuracy of the predictions for the long 16S and 23S sequences using the CYK algorithm cannot be significantly improved with a simple change of parameters. In particular we note that the strength of Pfold is in coupling the Knudsen-Hein grammar with phylogenetic information from sequence alignments.

The outline of the paper is as follows. In Section~\ref{prelim}, we give the definitions of secondary structure and the Knudsen-Hein SCFG and state our main results. In Section~\ref{singanal}, we illustrate the method of singularity analysis of generating functions on which our proofs are based.  In Section~\ref{main}, we derive the central limit theorems for various types of motifs and the asymptotic means as functions of the grammar probabilities. We additionally compute the expected number of multi-branch loops of a fixed degree and analyze the structure of the external loop. Finally, in Section~\ref{discussion}, we compare the theoretical results with the secondary structures from the Comparative RNA website~\citep{CRW} and the structures predicted for the same sequences using the CYK algorithm with the default Pfold parameters.

\section{Preliminaries} \label{prelim}

A secondary structure of length $n$ is a graph with vertex set $\{1, 2, 3, \dots, n\}$, whose edge set consists of the edges $\{(k, k+1) : 1 \leq k \leq n-1\}$, together with a collection of edges $B$ called base pairs which satisfies the following conditions. For $(i, j), (k, l) \in B$,
\begin{enumerate}
\item  $j - i > \theta$ for some threshold $\theta > 0$,
\item $i \neq l$ and $i=k$ $\Leftrightarrow$ $j=l$,
\item $i< k < j$  $\Rightarrow$ $i < k < l < j$.
\end{enumerate}

The first condition reflects the fact that due to steric constraints, each hairpin in the secondary structure has to contain at least $\theta$ unpaired nucleotides. The second condition implies that each vertex (i.e. nucleotide) can belong to at most one base pair. Finally, the third condition excludes pseudoknots which are often considered to be a part of the tertiary structure of the RNA molecule and requires that two edges $(i,j)$ and $(k,l)$ in $B$  with $i<k$, either define separate domains (when $j<k$) or are nested (when $j < l$).  All secondary structures consist of the following basic motifs illustrated in Figure~\ref{loops}. A helix is a set of contiguous nested base pairs. A hairpin is a sequence of consecutive single-stranded nucleotides closed by a single base pair.   A bulge loop interrupts helices by having unpaired nucleotides in a single strand. It can be left or right, depending on the side on which the single stranded nucleotides appear. An internal loop separates two  helices by having unpaired nucleotides on both strands, while a multi-branch loop has three or more helices radiating from it. The single stranded nucleotides that are not enclosed by a base pair form an external loop.

\begin{figure}[hbt!]
\begin{center}
     \includegraphics[height=3cm]{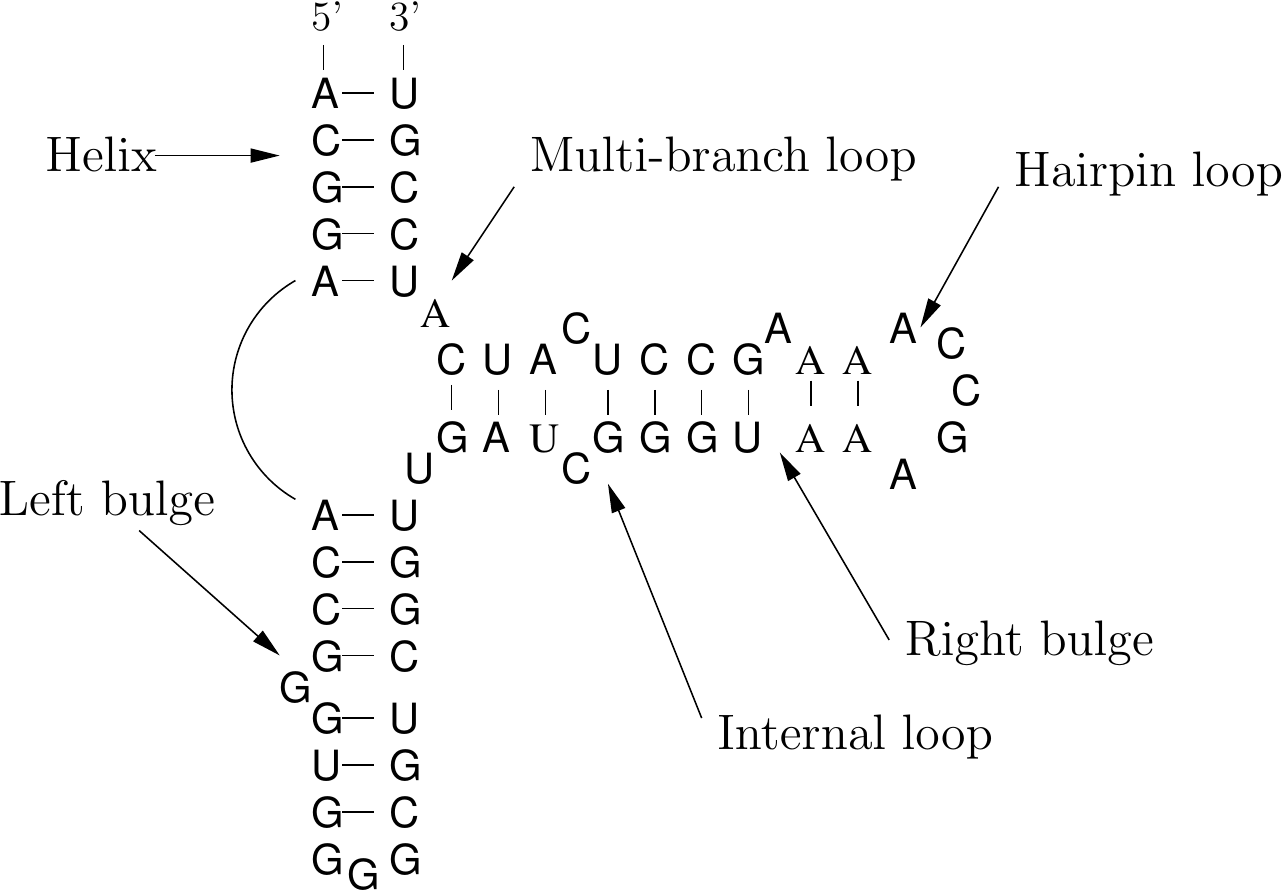}
  \caption{Helices and different types of loops in RNA secondary structures}
 \label{loops}
 \end{center}
\end{figure}

RNA secondary structures can be modeled using context-free grammars~\citep[see][]{Durbin}. The Knudsen-Hein grammar  which is used in the RNA secondary structure prediction program Pfold consists of nonterminal symbols $\{S, L, F\}$, terminal symbols $\{d, d', s\}$ and the rules
\begin{align*}
&S \rightarrow LS   \hspace{0.6cm} (p_1)   \hspace{0.5cm}\text{       or   } \hspace{0.5cm} L  \hspace{0.7cm}(q_1)  \\
&L \rightarrow dFd' \hspace{0.3cm} (p_2) \hspace{0.5cm}\text{       or   } \hspace{0.5cm} s  \hspace{0.8cm}(q_2)\\
&F \rightarrow dFd' \hspace{0.3cm}(p_3) \hspace{0.5cm}\text{       or   } \hspace{0.5cm} LS \hspace{0.5cm}(q_3).
\end{align*}
The numbers $p_i, q_i$, $i=1,2,3$ listed in parentheses are the probabilities for the production rules. They satisfy $p_i+q_i=1$ and $p_i, q_i >0$ and depend on the structures on which the grammar is trained. 

This grammar in non-ambiguous and each derivation corresponds to a unique secondary structure in which $j - i > 2$ for every base pair $(i,j)$. That is why in this paper by a secondary structure we will mean all graphs that satisfy the conditions in the definition of secondary structure for $\theta = 2$.  The terminal symbols $d$ and $d'$ correspond to left and right end nucleotides in a base pair, while $s$ corresponds to a single stranded nucleotide. Since secondary structures do not have pseudoknots, specifying the left and right ends of base pairs completely determines the whole structure. 
\begin{example} The simple hairpin given in Figure~\ref{derivation} is derived in the following way:
\[ S \overset{p_1}{\Rightarrow} LS \overset{q_2}{\Rightarrow} sS \overset{p_1}{\Rightarrow} sLS \overset{p_2}{\Rightarrow} sdFd'S \overset{p_3}{\Rightarrow} sddFd'd'S \overset{q_3}{\Rightarrow} sddLSd'd'S \overset{q_2}{\Rightarrow} \] \[sddsS d'd'S \overset{p_1}{\Rightarrow} sddsLSd'd'S \overset{q_2}{\Rightarrow}  sddssSd'd'S \overset{q_1}{\Rightarrow} sddssLd'd'S \overset{q_2}{\Rightarrow} \] \[sddsssd'd'S \overset{q_1}{\Rightarrow} sddsssd'd'L \overset{q_2}{\Rightarrow} sddsssd'd's.\] 
\begin{figure}[hbt!]
\begin{center}
     \includegraphics[height=2cm]{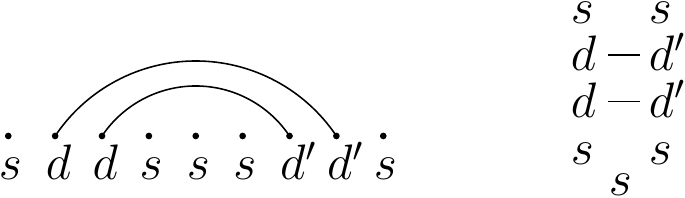}
\caption{A simple hairpin loop whose probability is $p_1^3p_2p_3q_1^2q_2^5q_3$}
 \label{derivation}
 \end{center}
\end{figure}
\end{example}

A stochastic grammar induces a probability distribution on the entire language if the sum of the probabilities of all the derivations is equal to 1. For each nonterminal symbol $N$, let $N(z)$ be the probability generating function of all secondary structures that can be generated starting from $N$, where $z$ records the number of nucleotides. In particular, if $n(M)$ is the number of nucleotides in a secondary structure $M$, we define
\[S(z) = \sum_{S \ras M} p(M) z ^ {n(M)}\] where $p(M)$ denotes the probability of the derivation of $M$ and the sum is over all secondary structures.  We can determine $S(z)$ by using a technique known as the Delest-Sch\"utzenberger-Viennot (DSV) method \citep{DSV}. For this we will need to work with $L(z)$ and $F(z)$ defined as
\[L(z) = \sum_{L \ras M} p(M) z ^ {n(M)}, \hspace{3cm} F(z) = \sum_{F \ras M} p(M) z ^ {n(M)},\]  where the sum is taken over all derivations $M$ that can be obtained starting from the nonterminals $L$ and $F$, respectively, and $p(M)$ denotes the probability of the derivation $M$. Through this technique the grammar can be converted into equations involving the generating functions $S(z)$, $L(z)$, $F(z)$. We get
\begin{align}
S(z) &= p_1 L(z)S(z) + q_1L(z) \notag\\
L(z) &= p_2 z^2 F(z) +q_2 z\\
F(z) &= p_3 z^2 F(z) + q_3 L(z) S(z). \notag
\end{align}
Eliminating $L(z)$ and $F(z)$, we get
\[ p_2 q_3 z^2 S(z)^2 -(1- p_1 q_2  z) (1 - p_3z^2)  S(z) + q_1 q_2 z (1-p_3 z^2) =0. \] Since $S(z)$ is a probabilistic generating function, it has a radius of convergence at least 1. Together with  $S(0) = 0$, this implies that \begin{equation} \label{S2} S(z)=\frac{(1- p_1 q_2  z) (1 - p_3z^2) - \sqrt{(1- p_1 q_2  z)^2 (1 - p_3z^2)^2 - 4 p_2 q_1 q_2 q_3 z^3 (1-p_3 z^2)}} {2p_2 q_3 z^2} \end{equation}
To determine when this grammar generates a probabilistic language, we find when $S(1)=1$. The condition
\begin{equation} \frac{(1- p_1 q_2 ) (1 - p_3) - \sqrt{(1- p_1 q_2)^2 (1 - p_3)^2 - 4 p_2 q_1 q_2 q_3(1-p_3)}} {2p_2 q_3} =1 \end{equation} is equivalent to
\begin{equation}|p_2-q_1q_2| = q_1q_2-p_2. \end{equation} Recalling that $p_2=1-q_2$, this reduces to
\begin{equation} \label{properties}(1+ q_1) q_2 \geq 1.\end{equation}

Our main result is a central limit theorem for the number of helices and the various types of loops generated by the Knudsen-Hein SCFG. 
\begin{theorem} \label{mainth} Let $\mathbb{X}_n$ be the number of base pairs, or helices,  or loops of a fixed type in a random secondary structure with $n$ nucleotides. If the probabilities are such that $f(p_1, p_2, p_3) \neq 0$ for a certain function $f$, then there exist nonzero constants $\mu$ and $\sigma$ such that the normalized random variables $$\mathbb{X}^*_n = \frac{\mathbb{X}_n-\mu n}{\sqrt{n \sigma^2}}$$ converge in distribution to a Gaussian variable with a speed of convergence $O(\frac{1}{\sqrt{n}})$. That is, we have 
\[\lim_{n \rightarrow \infty}  \mathbb{P} \left( \mathbb{X}^*_n< x\right) = \frac{1}{\sqrt{2\pi}} \int_{-\infty}^{x} e^{-\frac{c^2}{2}} \; dc \]
and
\[\sup_{x \in \mathbb{R}} \left| \mathbb{P} \left( \mathbb{X}^*_n< x\right) - \frac{1}{\sqrt{2\pi}} \int_{-\infty}^{x} e^{-\frac{c^2}{2}} \; dc \right| \leq O \left(\frac{1}{\sqrt{n}}\right).\]
 \end{theorem}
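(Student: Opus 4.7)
The plan is to apply the quasi-powers (Hwang) framework from analytic combinatorics to a bivariate generating function that marks the motif of interest. For each motif $\mathcal M$ (base pairs, helices, hairpins, bulges, internal loops, multi-branch loops), I would first introduce an auxiliary variable $u$ and refine the Knudsen-Hein grammar (when necessary) so that each occurrence of $\mathcal M$ is produced by exactly one rule, which we then weight by $u$. For base pairs this is immediate (mark every application of $L\to dFd'$ and $F\to dFd'$ by $u$). For helices, a helix starts exactly when we switch into the $F$ state through $L\to dFd'$ and continues through iterates of $F\to dFd'$, so a single $u$ is inserted at the rule $L\to dFd'$ but not at $F\to dFd'$. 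For hairpins, a helix closes a hairpin precisely when $F\to LS$ is followed by a derivation in which the $L$ uses $L\to s$ and $S$ uses $S\to L\to s$ repeatedly; this is handled by splitting $S$ and $L$ into auxiliary symbols encoding whether we are inside a potential hairpin. Multi-branch loops and internal/bulge loops are handled similarly by a small finite refinement of the automaton encoded by the grammar. In every case the refinement preserves the induced probability distribution on structures.

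The refined DSV system yields a bivariate polynomial equation $P(z,u,S(z,u))=0$ of degree two in $S$, so
\[ S(z,u) = \frac{A(z,u) - \sqrt{D(z,u)}}{B(z,u)}, \]
where $A,B,D$ are explicit polynomials with $D(z,1)$ equal to the discriminant appearing in \eqref{S2}. Let $\rho(u)$ be the smallest positive root in $z$ of $D(z,u)$. By assumption \eqref{properties} the derivation terminates almost surely, so $\rho(1)$ is the dominant singularity and is a simple root of $D(z,1)$. The implicit function theorem then makes $\rho(u)$ analytic in a complex neighborhood of $u=1$, and the singularity of $z\mapsto S(z,u)$ remains of square-root type throughout that neighborhood. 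Standard transfer theorems (Flajolet--Sedgewick, Theorem VI.1 and Corollary VI.1) give, uniformly in $u$,
\[ [z^n]S(z,u) \;\sim\; C(u)\, \rho(u)^{-n}\, n^{-3/2}, \qquad C(u) \text{ analytic at } 1, \ C(1)\neq 0. \]

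Since $[z^n]S(z,u)/[z^n]S(z,1)$ is exactly $\mathbb{E}(u^{\mathbb{X}_n})$, we are in the quasi-powers regime with $A(u):=\log(\rho(1)/\rho(u))$ and $B(u):=\log(C(u)/C(1))$. Hwang's quasi-powers theorem then delivers the Gaussian limit law with
\[ \mu = -\frac{\rho'(1)}{\rho(1)}, \qquad \sigma^2 = -\frac{\rho''(1)}{\rho(1)} - \frac{\rho'(1)}{\rho(1)} + \left(\frac{\rho'(1)}{\rho(1)}\right)^{\!2}, \]
together with the Berry--Esseen bound $O(n^{-1/2})$, provided the variability condition $\sigma^2\neq 0$ holds.

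The main obstacle, and the reason the hypothesis $f(p_1,p_2,p_3)\neq 0$ appears in the statement, is precisely verifying this variability condition and showing that it is generic in the parameters. For each of the seven motif types one must differentiate the discriminant condition $D(\rho(u),u)=0$ implicitly twice, substitute $u=1$, and check that the resulting rational expression in $(p_1,p_2,p_3)$ does not vanish identically. This reduces to checking that a specific polynomial $f$ in the parameters is not the zero polynomial, after which the set of excluded probability triples is automatically a proper algebraic subvariety of the parameter cube, hence of measure zero. Carrying out the refinement of the grammar for helices and multi-branch loops, and the subsequent algebraic computation of $\sigma^2$, is the laborious but routine core of the argument; the rest is a direct application of the analytic-combinatorics machinery to the algebraic function produced by the DSV method.
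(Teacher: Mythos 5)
Your proposal is correct and follows essentially the same route as the paper: mark each motif with an auxiliary variable in the DSV system, locate the square-root singularity $\rho(u)$ of the resulting algebraic function via the discriminant and the implicit function theorem, and invoke the quasi-powers/singularity-perturbation machinery (the paper cites Flajolet--Sedgewick Theorem IX.12 directly, which packages exactly the steps you unroll by hand), with the hypothesis $f\neq 0$ playing the role of the variability condition $\sigma^2\neq 0$ whose genericity is argued in Section~\ref{funf}. The only cosmetic difference is that for the loop types the paper computes the marked sub-generating functions $H_h$, $H_b$, $H_i$, $H_m$ by explicitly summing the relevant derivations rather than formally refining the grammar with auxiliary nonterminals.
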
 
 
The function $f$ which appears in the conditions of Theorem~\ref{mainth} is discussed in Section~\ref{funf}, where we explain why $f(p_1, p_2, p_3) \neq 0$ for all probabilities except for a set of measure zero, so that the result holds for almost all choices of probabilities. Theorem~\ref{mainth} is proved in Section~\ref{main}, where the different types of motifs are considered separately. The proof is based on singularity analysis of bivariate generating functions. In the following section, we illustrate this method by obtaining the asymptotic estimate for the coefficients of $S(z)$. 

The constants $\mu$ from Theorem~\ref{mainth} are given as functions of the probabilities in Section~\ref{main} for all motifs. A surprising fact is that the following relations between them hold independently of the probability parameters.

\begin{corollary*} If $p_i, q_i >0$
\begin{itemize}
\item [(i)] $\mathbb{E}(\mathbb{X}_n^{lb}) = \mathbb{E}(\mathbb{X}_n^{rb})$,
\item [(ii)] $\mathbb{E}(\mathbb{X}_n^m) = \frac{1}{4}\mathbb{E}(\mathbb{X}_n^{hel})(1+o(n))$,
\item [(iii)] $\mathbb{E}(\mathbb{X}_n^{hp})=(\mathbb{E}(\mathbb{X}_n^{i}) + \mathbb{E}(\mathbb{X}_n^{m})) (1+ o(n))$,
\item [(iv)] $\mathbb{E}(\mathbb{X}_n^{m})=(\mathbb{E}(\mathbb{X}_n^{lb}) + \mathbb{E}(\mathbb{X}_n^{i})) (1+ o(n))$,
\item [(v)] $\mathbb{E}(\mathbb{X}_n^{m,r+1}) < \frac{1}{2}\mathbb{E}(\mathbb{X}_n^{m,r})(1+o(n))$, $r \geq 2$.
\end{itemize}
where the superscripts $lb, rb, m, hel, hp$, and $i$ denote left bulges, right bulges, multi-branch loops, helices, hairpins, and internal loops respectively, while $\mathbb{X}_n^{m, r}$ is the number of multi-branch loops of degree $r$ in a random secondary structure with $n$ nucleotides. 
\end{corollary*}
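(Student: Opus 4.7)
My plan is to deduce the five identities from the closed-form expressions for the leading coefficients $\mu$, which will be derived in Section~\ref{main} via the singularity-analysis method sketched in Section~\ref{singanal}, together with one global combinatorial identity for secondary structures. The relations (i)--(iv) are not algebraically independent, so once the $\mu$'s are computed the verification reduces to checking a small set of polynomial identities in $(p_1, p_2, p_3)$.

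For (i), I would argue exact equality via a reflection symmetry. Reading the nucleotide sequence from right to left and swapping $d \leftrightarrow d'$ is a size-preserving involution on secondary structures that exchanges left bulges with right bulges. The Knudsen--Hein derivation probability of a structure depends only on how many times each rule is used, and these counts are simple functions of the numbers of helices, single bases, and base pairs ($p_2$ and $q_3$ count helices, $q_2$ counts single bases, $p_3$ equals base pairs minus helices, $q_1$ equals helices plus one, and $p_1$ equals single bases minus helices minus one), each of which is manifestly invariant under reflection. Hence the distribution on structures is reflection-invariant, so $\mathbb{X}_n^{lb}$ and $\mathbb{X}_n^{rb}$ have the same law, and in particular $\mathbb{E}(\mathbb{X}_n^{lb})=\mathbb{E}(\mathbb{X}_n^{rb})$ for every $n$.

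For (ii), (iii), and (iv), I would use the following pointwise structural identity: every helix has a unique inner end that closes exactly one loop, which is either a hairpin, a left bulge, a right bulge, an internal loop, or a multi-branch loop, and every non-external loop is closed by exactly one helix. Thus $\mathbb{X}_n^{hel} = \mathbb{X}_n^{hp} + \mathbb{X}_n^{lb} + \mathbb{X}_n^{rb} + \mathbb{X}_n^{i} + \mathbb{X}_n^{m}$, giving $\mu_{hel} = \mu_{hp} + 2\mu_{lb} + \mu_i + \mu_m$ after applying (i). Given this linear relation, any two of (ii)--(iv) imply the third. I would therefore verify (iii) and (iv) by direct substitution of the closed-form $\mu$'s computed in Section~\ref{main}, then deduce (ii) by adding them and applying the structural identity, which yields $\mu_{hel} = 4\mu_m$.

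For (v), I would refine the multi-branch analysis of Section~\ref{main} by marking multi-branch loops of exact degree $r$. A degree-$r$ multi-branch loop corresponds to an $L$-list inside some $F \to LS$ that contains exactly $r$ items of type $dFd'$ and any number of items of type $s$; the probability of such an arrangement factors cleanly, since the $S$-rule probabilities $p_1^{k-1} q_1$ depend only on list length and not on the placement of helix-items within the list. Summing over arrangements and applying singularity analysis produces $\mu_{m,r} = C(p_1,p_2,p_3)\, \lambda(p_1,p_2,p_3)^{r}$ for an explicit base $\lambda$ that is a rational function of the helix-weight $p_2 z^2 F(z)$ and the single-base weight $q_2 z$ evaluated at the dominant singularity $z_0$ of $S(z)$. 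The main technical obstacle is to prove $\lambda < 1/2$ uniformly over admissible probabilities; I expect this bound to follow from the probabilistic constraint $(1+q_1)q_2 \geq 1$ of~(\ref{properties}), which forces the weight of continuing the $L$-list with a single base at $z_0$ to dominate that of continuing with a helix branch. This is the delicate part of the argument; relations (i)--(iv) are essentially bookkeeping once the closed-form $\mu$'s from Section~\ref{main} are in hand.
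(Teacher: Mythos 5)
Your overall architecture for (ii)--(v) coincides with the paper's: the corollary is read off from the closed-form asymptotic means computed in Section~\ref{main}, and items (ii)--(iv) reduce to the algebraic identities $\frac{\alpha\beta}{4\gamma}=\frac{1}{4}\cdot\frac{\alpha\beta}{\gamma}$, $(1+p_1q_2\rho_0)=p_1q_2\rho_0+1$, and $\alpha^2+p_1q_2\rho_0\,\alpha=\alpha$ with $\alpha=1-p_1q_2\rho_0$. Your extra observation that $\mathbb{X}_n^{hel}=\mathbb{X}_n^{hp}+\mathbb{X}_n^{lb}+\mathbb{X}_n^{rb}+\mathbb{X}_n^{i}+\mathbb{X}_n^{m}$ holds pointwise (each helix ends with exactly one application of $F\to LS$, which opens exactly one loop of one of the five types) is correct under the paper's loop classification and is consistent with the computed means, so deducing (ii) from (iii) and (iv) is legitimate. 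For (i) you take a genuinely different route: the paper obtains exact equality because the left- and right-bulge contributions to the generating function are the identical term $H_bF$, so $C^{lb}=C^{rb}$; your reflection involution proves the stronger statement that the whole induced distribution is reversal-invariant. That argument is sound, because every rule count is a function of the numbers of helices, base pairs, and unpaired bases, all reflection-invariant. (One slip: the number of applications of $S\to LS$ is the number of unpaired bases minus one, not minus helices minus one; this does not affect the conclusion.)

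The genuine gap is in (v). The fixed-degree computation gives $\mathbb{E}(\mathbb{X}_n^{m,r+1})/\mathbb{E}(\mathbb{X}_n^{m,r})\to p_1q_2\rho_0/(1+p_1q_2\rho_0)$, so the inequality you need is precisely $p_1q_2\rho_0<1$. You leave this unproven and say you expect it to follow from the consistency condition~\eqref{properties}; that is the wrong source. Condition~\eqref{properties} is never assumed in the asymptotic theorems, and the bound does not come from it: what closes the gap is Lemma~\ref{lm}, namely $\rho_0<1/(p_1q_2)$, which follows from $R(0)>0$ and $R(1/(p_1q_2))<0$ and holds for all $p_i,q_i>0$. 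Moreover, your heuristic that the single-base weight should dominate the helix weight at the singularity does not translate into the operative inequality: the helix weight $p_2\rho_0^2F(\rho_0)$ drops out of the ratio of consecutive means entirely, and what survives is only the geometric factor $p_1q_2\rho_0$ coming from lengthening the single-stranded segments of the loop. Replace the appeal to~\eqref{properties} by the appeal to Lemma~\ref{lm} and item (v) is complete.
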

We find the invariance of these relations under parameter change especially interesting because it illustrates that the potential for changing the distributions by changing of different motifs is limited. This is important when the problem at hand is to model  structures which do not satisfy the same equalities.

\section{Singularity analysis} \label{singanal}

The total probability of all structures with $n$ nucleotides is given by $[z^n]S(z)$, the coefficient of $z^n$ in $S(z)$. This result will be needed later, so we derive it here as our basic example of asymptotic analysis related to this grammar. We use the following theorem of~\citet{FO} to determine the asymptotic growth of the coefficients of $S(z)$.
\begin{theorem}[\citet{FO}] \label{fo}
Assume that $S(z)$ has a singularity at $z = \rho >0$, is analytic in the region $\Delta \setminus \{\rho\}$, depicted in Figure~\ref{deltaregion}, and that as  $z \rightarrow \rho$
in $\Delta$,
$S(z) \sim K(1 - z/\rho)^c$, for some constants $K \neq 0$ and $c \neq 0, 1, 2, \dots$.

Then, as $n \rightarrow \infty$, 
\[[z^n]S(z) \sim \frac{K}{\Gamma(-c)} n^{-c-1} \rho^{-n},\]
where $\Gamma(z)$ denotes the classical gamma function.
\end{theorem}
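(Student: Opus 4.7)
The plan is to derive the coefficient asymptotics from Cauchy's integral formula by deforming the contour into a Hankel-type path that hugs the singularity $\rho$, and then reducing the integral to a standard representation of $1/\Gamma(-c)$. Concretely, start from
\[
[z^n]S(z) \;=\; \frac{1}{2\pi i} \oint_{\gamma} S(z)\, z^{-n-1}\, dz,
\]
where $\gamma$ is a small positively oriented circle around the origin. Because $S(z)$ is analytic in $\Delta \setminus \{\rho\}$, I can inflate $\gamma$ and then deform it into a contour $\gamma(n)$ consisting of the boundary of a truncated $\Delta$-region together with a small Hankel loop around $z = \rho$ at distance $1/n$. The contributions from the outer arcs (where $|z| = \rho(1 + \eta)$ for some small fixed $\eta > 0$) are exponentially smaller than $\rho^{-n}$, so they are absorbed into the error term.

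Next, on the Hankel portion I perform the change of variable $z = \rho(1 + t/n)$, which sends the loop to a fixed Hankel contour $\mathcal{H}$ in the $t$-plane enclosing $t = 0$ from the right. Under this substitution,
\[
z^{-n-1} = \rho^{-n-1}\Bigl(1 + \tfrac{t}{n}\Bigr)^{-n-1} \;\longrightarrow\; \rho^{-n-1} e^{-t},
\qquad dz = \frac{\rho}{n}\, dt,
\]
and the singular approximation gives $S(z) \sim K(-t/n)^c$. Assembling these pieces yields
\[
[z^n]S(z) \;\sim\; \frac{K\,\rho^{-n}}{2\pi i\, n^{c+1}} \int_{\mathcal{H}} (-t)^c\, e^{-t}\, dt,
\]
and then applying Hankel's classical contour representation
\[
\frac{1}{\Gamma(-c)} \;=\; \frac{1}{2\pi i} \int_{\mathcal{H}} (-t)^c\, e^{-t}\, dt
\]
(valid precisely because $c \notin \{0,1,2,\dots\}$, which is why that hypothesis is imposed) produces the claimed estimate $[z^n]S(z) \sim \frac{K}{\Gamma(-c)} n^{-c-1} \rho^{-n}$.

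The main obstacles are analytic rather than algebraic. First, I must justify the contour deformation: the $\Delta$-domain assumption is exactly what is needed here, because it guarantees $S(z)$ extends analytically beyond $|z|=\rho$ in a sector around $\rho$, so the inner Hankel loop and the outer $\Delta$-arcs together form a valid replacement for $\gamma$. Second, I have to control two kinds of tails: the outer $\Delta$-arcs contribute $O(\rho^{-n}(1+\eta)^{-n})$, which is negligible; and on the Hankel loop itself I must split the integral into a bounded part (where the substitution $t = n(z/\rho - 1)$ gives uniform convergence of the integrand to $(-t)^c e^{-t}$) and the truncated tails with $|t| > \log^2 n$, which decay super-polynomially thanks to the $e^{-t}$ factor. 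Interchanging limit and integral on the bounded part requires the singular estimate $S(z) \sim K(1-z/\rho)^c$ to be uniform as $z \to \rho$ in $\Delta$, which is precisely the hypothesis. The trickiest bookkeeping is making the error estimate quantitative enough to recover the $\sim$ relation (rather than merely an order of magnitude), which is done by writing $S(z) = K(1-z/\rho)^c(1+o(1))$ and tracking the $o(1)$ through both the bounded Hankel piece and the tail cutoffs.
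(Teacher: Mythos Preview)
The paper does not prove this theorem: it is quoted from Flajolet and Odlyzko and used as a black box, so there is no ``paper's own proof'' to compare against. Your sketch is the standard Flajolet--Odlyzko argument (Cauchy's formula, deformation into a $\Delta$-contour with a Hankel loop at scale $1/n$ around $\rho$, the substitution $z=\rho(1+t/n)$, and Hankel's representation of $1/\Gamma(-c)$), and it is correct in outline; the only point to watch is that the uniformity of $S(z)\sim K(1-z/\rho)^c$ in the $\Delta$-approach must be invoked explicitly to push the $o(1)$ through the integral, which you already flag.
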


\begin{figure}[hbt!]
\centering
     \includegraphics[height=4cm]{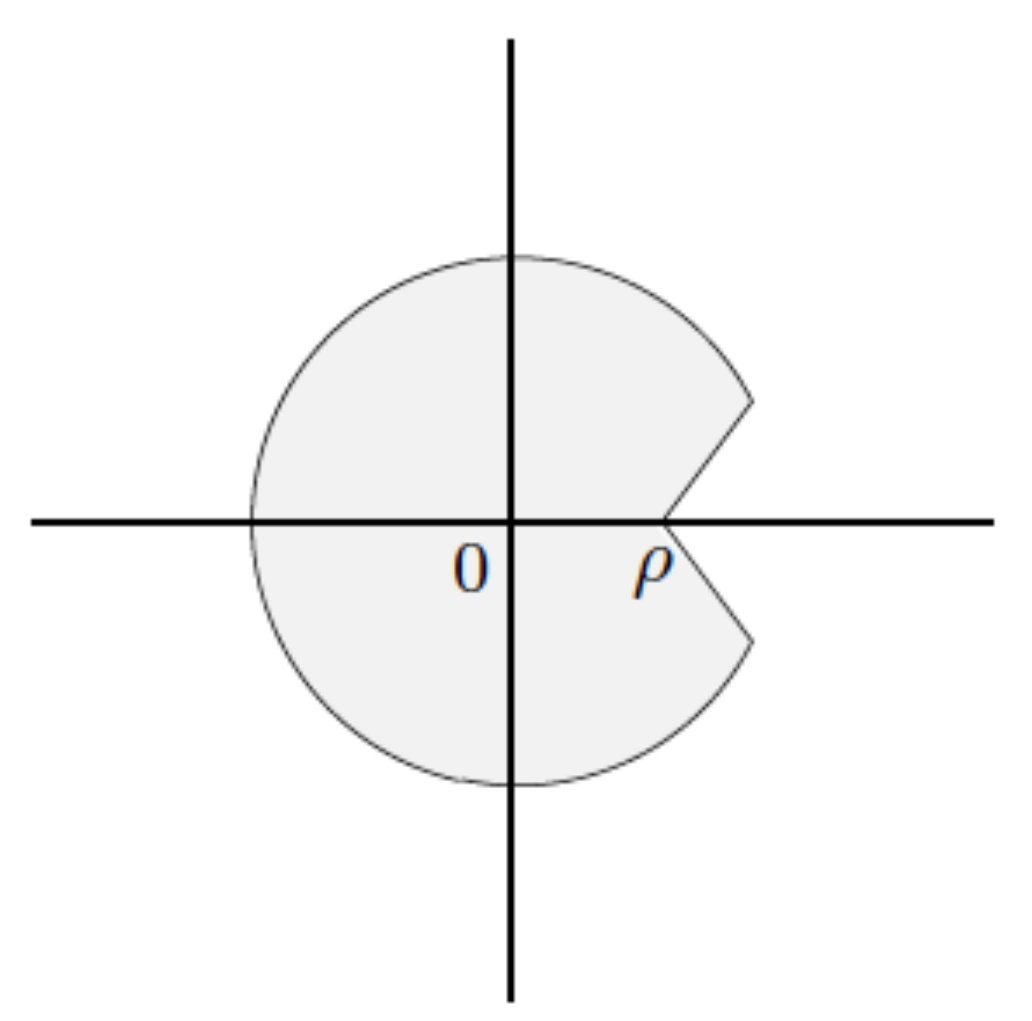}
  \caption{The function $S(z)$ needs to be analytic at all points in a region $\Delta$ of the depicted shape except at $\rho$}
 \label{deltaregion}
\end{figure}

Set \begin{equation} \label{R} R(z) = (1- p_1 q_2  z)^2 (1 - p_3z^2) - 4 p_2 q_1 q_2 q_3 z^3.\end{equation}

From the explicit formula for $S(z)$ given in~\eqref{S2}, we see that the singularities of $S(z)$ are the zeros of the polynomial $P(z) = (1-p_3z^2)R(z)$, and in fact the dominant singularity is a root of $R(z)$, which follows from the following two lemmas. 

\begin{lemma}\label{lm1}  If $p_i, q_i >0$, one of the roots of $R(z)$ of smallest modulus  is a positive real number. \end{lemma}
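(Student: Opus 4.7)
My plan is to exhibit a positive real root of $R$ directly and then rule out any complex root of strictly smaller modulus via a reverse-triangle-inequality estimate.

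\textbf{Existence of a positive real root.} I would evaluate $R(0) = 1 > 0$ and
\[R(1/\sqrt{p_3}) = (1 - p_1 q_2/\sqrt{p_3})^2 \cdot 0 - 4 p_2 q_1 q_2 q_3 / p_3^{3/2} < 0.\]
Continuity of $R$ on $[0, 1/\sqrt{p_3}]$ and the intermediate value theorem then produce a smallest positive real root $\rho \in (0, 1/\sqrt{p_3})$. By minimality, $R(x) > 0$ throughout $[0, \rho)$.

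\textbf{Minimality of $\rho$.} Pick any $w \in \mathbb{C}$ with $|w| < \rho$. The elementary identity $|1 - \alpha u|^2 - (1 - \alpha|u|)^2 = 2\alpha(|u| - \operatorname{Re}(u)) \geq 0$, valid for real $\alpha > 0$, supplies the two pointwise bounds
\[|1 - p_1 q_2 w|^2 \geq (1 - p_1 q_2 |w|)^2 \quad\text{and}\quad |1 - p_3 w^2| \geq 1 - p_3 |w|^2 > 0,\]
where positivity in the second uses $|w| < 1/\sqrt{p_3}$. Multiplying these together bounds the modulus of the cubic-in-$z$ factor of $R$ from below, and the reverse triangle inequality applied to $R(w) = (1 - p_1 q_2 w)^2(1 - p_3 w^2) - 4 p_2 q_1 q_2 q_3 w^3$ yields
\[|R(w)| \geq (1 - p_1 q_2 |w|)^2 (1 - p_3 |w|^2) - 4 p_2 q_1 q_2 q_3 |w|^3 = R(|w|) > 0,\]
where the final inequality is Step~1 applied to the real number $|w| \in [0, \rho)$. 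Hence no root of $R$ lies in the open disk $|z| < \rho$, and the positive real number $\rho$ is therefore a root of smallest modulus.

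The part most likely to need care is the direction of the magnitude estimates: to subtract rather than add $4 p_2 q_1 q_2 q_3 |w|^3$ when bounding $|R(w)|$ from below, I need lower bounds on $|1 - p_1 q_2 w|^2$ and $|1 - p_3 w^2|$, and the second such bound forces the constraint $|w| < 1/\sqrt{p_3}$. That is exactly why Step~1 is set up to locate $\rho$ strictly inside $(0, 1/\sqrt{p_3})$, so that the entire disk $|z| < \rho$ sits in the region where both bounds are available.
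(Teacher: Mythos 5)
Your proof is correct, but it takes a genuinely different route from the paper's. The paper never touches the disk $|z|<\rho$ directly: it argues via the probability generating function $S(z)$, which has non-negative coefficients and radius of convergence at least $1$, so (by Pringsheim's theorem) it has a positive real singularity on its circle of convergence; that singularity must be a zero of $(1-p_3z^2)R(z)$, and the sign change of $R$ on $(0,1/\sqrt{p_3})$ then places the dominant singularity among the roots of $R$. You instead work entirely with the polynomial $R$: the intermediate value theorem produces the smallest positive real root $\rho\in(0,1/\sqrt{p_3})$, and the estimate $|R(w)|\geq(1-p_1q_2|w|)^2(1-p_3|w|^2)-4p_2q_1q_2q_3|w|^3=R(|w|)>0$ for $|w|<\rho$ excludes any root of smaller modulus. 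Your chain of inequalities is sound: $|1-\alpha w|^2\geq(1-\alpha|w|)^2\geq 0$ and $|1-p_3w^2|\geq 1-p_3|w|^2>0$ are both non-negative lower bounds, so their product bounds the product, and the reverse triangle inequality finishes. What your version buys is self-containment (no appeal to analyticity of $S$ or to Pringsheim) and it gives the minimality of $\rho$ as an honest quantitative statement rather than inferring it from the location of the dominant singularity of $S$; it also anticipates exactly the triangle-inequality mechanism the paper deploys in the following lemma to prove that $\rho_0$ is the unique root on its circle. What the paper's version buys is brevity and the simultaneous identification of $\rho_0$ with the radius of convergence of $S$, which is what the later singularity analysis actually needs.
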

\begin{proof}
Since $S(z)$ is a probability generating function, it has a radius of convergence at least 1. The fact that the coefficients of $S(z)$ are positive implies that it has a positive real singularity equal to its radius of convergence. From~\eqref{S2}, we see that this singularity must be a root of the polynomial 
\[(1- p_1 q_2  z)^2 (1 - p_3z^2)^2 - 4 p_2 q_1 q_2 q_3 z^3 (1-p_3 z^2) = R(z) (1-p_3 z^2). \] Since $R(0) =1 >0$ and $R(1/\sqrt{p_3}) <0 $, $R(z)$ has a real zero in the interval $(0,1/\sqrt{p_3})$. Therefore the smallest positive real singularity of $S(z)$ must come from the zeros of $R(z)$, which implies that among the zeros of smallest modulus of $R(z)$, one is positive and real. 
\end{proof}

From now on, let $\rho_0$ be the root with smallest modulus of $R(z)$ which is a positive real number. The following properties of $\rho_0$ will be used in the proofs that follow.

\begin{lemma}\label{lm} $\rho_0$ is the unique root of $R(z)$ on the circle $\{z : |z| =\rho_0 \}$. Moreover, $$1< \rho_0 < \min{\{ 1/ p_1q_2, 1/\sqrt{p_3} \}} \hspace{0.5cm} \text{ and } \hspace{0.5cm} R'(\rho_0)<0.$$
\end{lemma}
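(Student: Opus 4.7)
The plan is to establish the three claims---the location bounds $1 < \rho_0 < \min\{1/(p_1 q_2), 1/\sqrt{p_3}\}$, the uniqueness of $\rho_0$ on its circle, and $R'(\rho_0)<0$---in that order, since the bounds are invoked later to ensure $1-p_1 q_2 \rho_0 > 0$ and $1-p_3 \rho_0^2 > 0$.

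For the bounds, the starting point is the algebraic identity
\[R(1) \;=\; (1-p_3)\bigl(1-(1+q_1)q_2\bigr)^{2},\]
obtained by substituting $q_3=1-p_3$ and $p_2=1-q_2$ in $R(1) = (1-p_1q_2)^2(1-p_3) - 4 p_2 q_1 q_2 q_3$, and recognizing that $(1-p_1 q_2)^2 - 4 p_2 q_1 q_2 = (1-(1+q_1)q_2)^2$ after $p_1 = 1-q_1$. In the generic probabilistic regime $(1+q_1)q_2 \neq 1$ this gives $R(1)>0$. Since $S(z)$ has positive coefficients whose sum at $z=1$ is at most $1$, its radius of convergence, which equals $\rho_0$, is at least $1$; combined with $R(1)\neq 0$, this yields $\rho_0 > 1$. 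For the upper bounds, the explicit evaluations
\[R(1/(p_1 q_2)) = -\frac{4 p_2 q_1 q_2 q_3}{(p_1 q_2)^3} < 0, \qquad R(1/\sqrt{p_3}) = -\frac{4 p_2 q_1 q_2 q_3}{p_3^{3/2}} < 0,\]
together with $R(1)>0$ and the intermediate value theorem, produce a positive root of $R$ in the interval $(1, \min\{1/(p_1 q_2), 1/\sqrt{p_3}\})$. Because $\rho_0$ is a positive real root of smallest modulus, it is the smallest positive real root and therefore lies in this interval.

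The main obstacle is uniqueness on the circle $|z|=\rho_0$. Suppose $z^* = \rho_0 e^{i\theta}$ with $\theta\neq 0$ were another root. Rewriting $R(z^*)=0$ as $(1-p_1 q_2 z^*)^2(1-p_3 (z^*)^2) = 4 p_2 q_1 q_2 q_3 (z^*)^3$ and taking moduli yields
\[\bigl|1-p_1 q_2 z^*\bigr|^2 \bigl|1-p_3 (z^*)^2\bigr| \;=\; 4 p_2 q_1 q_2 q_3 \rho_0^3 \;=\; (1-p_1 q_2 \rho_0)^2 (1-p_3 \rho_0^2),\]
the last equality coming from $R(\rho_0)=0$. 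Because $p_1 q_2 \rho_0 < 1$ and $p_3 \rho_0^2 < 1$, the reverse triangle inequality $|1-w|\geq 1-|w|$---with equality iff $w$ is a nonnegative real---applies to each factor: $|1-p_1 q_2 z^*|\geq 1-p_1 q_2 \rho_0$ with equality iff $z^* = \rho_0$, and $|1-p_3 (z^*)^2|\geq 1-p_3 \rho_0^2$ with equality iff $(z^*)^2 = \rho_0^2$, i.e., $z^* \in \{\pm \rho_0\}$. For any $z^*\neq\rho_0$ on the circle at least one inequality is strict, forcing the left-hand side to exceed the right-hand side, a contradiction.

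Finally, for $R'(\rho_0) < 0$, direct differentiation gives
\[R'(z) = -2 p_1 q_2 (1-p_1 q_2 z)(1-p_3 z^2) - 2 p_3 z (1-p_1 q_2 z)^2 - 12 p_2 q_1 q_2 q_3 z^2,\]
and the bounds ensure that each of the three summands is strictly negative at $z=\rho_0$.
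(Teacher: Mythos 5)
Your proof is correct and follows essentially the same route as the paper: sign evaluations of $R$ at the endpoints for the bounds, the strict triangle inequality applied to the factored form of $R(z)=0$ (covering both the non-real roots and $z=-\rho_0$) for uniqueness on the circle, and term-by-term negativity of $R'$ at $\rho_0$. The one difference is that you explicitly verify $R(1)=(1-p_3)\bigl(1-(1+q_1)q_2\bigr)^2>0$ to obtain $\rho_0>1$ --- a step the paper's proof silently omits --- which, as you correctly note, requires the non-degeneracy assumption $(1+q_1)q_2\neq 1$.
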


\begin{proof}
In the proof of Lemma~\ref{lm1}, we have already shown that $\rho_0 < 1/\sqrt{p_3}$. The fact that $\rho_0<1/p_1 q_2$ also follows from  $R(0)>0$ and $R(1/p_1q_2)<0$. Suppose that $R(z)$ has two complex roots $w, \bar{w}$, with $|w|=\rho_0$. Then, by the triangle inequality,
\[ \left|1- p_1 q_2  w\right| > 1- p_1 q_2  \rho_0 >0 \hspace{1cm} \text{and} \hspace{1cm} \left|1 - p_3w^2 \right| > 1 - p_3\rho_0^2 >0,\]  where the  inequalities are strict because $w$ is not real. This contradicts
\begin{align*}\left|1- p_1 q_2  w\right|^2 \left|1 - p_3w^2 \right| &=4 p_2 q_1 q_2 q_3 |w|^3 \\ &= 4 p_2 q_1 q_2 q_3 \rho_0^3=(1- p_1 q_2  \rho_0)^2 (1 - p_3\rho_0^2).\end{align*}  
Similarly, we get a contradiction if we assume that $R(-\rho_0) = 0$.
Lastly, we compute $$R'(z) = -2p_1q_2(1- p_1 q_2  z) (1 - p_3z^2) - 2p_3 z (1- p_1 q_2  z)^2 - 12p_2 q_1 q_2 q_3 z^2,$$from where it is clear that $R'(\rho_0) <0$.
\end{proof}

As a consequence, if we set
\[ Q(z) = \frac{(1- p_1 q_2  z) (1 - p_3z^2)} {2p_2 q_3 z^2}\] and
\[P(z) = P_1(z)(1-\frac{z}{\rho_0}),\]
then \[S(z) - Q(\rho_0) = \frac{-\sqrt{P_1(\rho_0)}}{2p_2 q_3 \rho_0^2} (1-\frac{z}{\rho_0})^{1/2} + O(1-\frac{z}{\rho_0})\] when $z \rightarrow \rho_0$.

The coefficients in the expansion of $S(z)$ are the same as in the expansion of $S(z)-Q(\rho)$, except for the first one. From Theorem~\ref{fo}, we get 

\begin{equation} \label{sn} [z^n]S(z) \sim -\frac{\sqrt{(1- p_1 q_2  \rho_0) (1 - p_3\rho_0^2)(3-p_1q_2 \rho_0 -p_3\rho_0^2-p_1p_3q_2\rho_0^3)}}{2p_2q_3\rho_0^2 \Gamma(-1/2)} n^{-3/2} \rho_0^{-n}. \end{equation}

\section{Asymptotic distributions of substructures} \label{main}

In this section we prove central limit theorems for various RNA secondary structure motifs for generic choices of the grammar probabilities. We will use the following theorem~\citep[Theorem IX.12]{FS} which we state specialized for our purposes.
\begin{theorem}[\citet{FS}]\label{fl}
Let $G(z,u)$ be a function that is bivariate analytic at $(z,u)=(0,0)$ and has non-negative coefficients and let  $\mathbb{X}_n$ be a random variable such that
\[\mathbb{P}(\mathbb{X}_n = k) = \frac{[z^n u^k] G(z,u)}{[z^n] G(z,1)}. \]
If the technical conditions $(i) - (iii)$ listed below are satisfied, then there exist constants $\mu$ and $\sigma$  such that the normalized random variable 
\[\mathbb{X}^*_n=  \frac{\mathbb{X}_n-\mu n}{\sqrt{n\sigma^2}}\] converges in distribution to a Gaussian variable with a speed of convergence $O(\frac{1}{\sqrt{n}})$. \\ 
\noindent The technical conditions are
\begin{itemize}
\item[(i)] There exist  functions $A, B, C$ analytic in a domain $\mathcal{D} = \{ |z| < r\} \times \{|u-1| < \epsilon\}$ such that
\[ G(z,u) = A(z,u) + B(z,u)C(z,u)^{1/2} \]
for all $(z,u) \in \{ |z| < r_0\} \times \{|u-1| < \epsilon\}$ for some $r_0 <r$. Furthermore, assume that in $|z| < r$, there exists a unique root $z_0$ of the equation $C(z,1)=0$ and that $B(z_0,1) \neq 0$, 

\item[(ii)] $C_{1,0} C_{0,1} \bigg|_{z=z_0, u=1} \neq 0$, where $C_{i,j} = \frac{\partial^{i+j}}{\partial z^i \partial u^j}C,$

\item[(iii)] \begin{equation} \label{variability} z_0 C_{1,0}^2 C_{0,2} - 2 z_0 C_{1,0} C_{1,1} C_{0,1} + z_0 C_{2,0}C_{0,1}^2  + C_{0,1}^2 C_{1,0} + z_0 C_{0,1} C_{1,0}^2 \bigg|_{z=z_0, u=1} \neq 0. \end{equation} 
\end{itemize}
\noindent The constants $\mu$ and $\sigma$ are given by:
\begin{eqnarray}
\mu &=& \frac{C_{0,1}}{z_0 C_{1,0}} \bigg|_{z=z_0, u=1}\\
\sigma^2 &=&  \frac{z_0 C_{1,0}^2 C_{0,2} - 2z_0 C_{1,0} C_{1,1} C_{0,1} + z_0 C_{2,0}C_{0,1}^2  + C_{0,1}^2 C_{1,0} + z_0 C_{0,1} C_{1,0}^2 }{z_0^2 C_{1,0}^3 } \bigg|_{z=z_0, u=1} \label{stdev}
\end{eqnarray}
\end{theorem}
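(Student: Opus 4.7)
The plan is to reduce the bivariate generating function $G(z,u)$ to a ``quasi-powers'' form for the probability generating function of $\mathbb{X}_n$ and invoke Hwang's Gaussian limit theorem with Berry-Esseen rate. The first step is to locate the singularity as a function of $u$. Since $z_0$ is the unique root of $C(z,1)=0$ in $|z|<r$ and condition (ii) forces $C_{1,0}(z_0,1)\neq 0$, the analytic implicit function theorem supplies a function $\rho(u)$, analytic near $u=1$, with $\rho(1)=z_0$ and $C(\rho(u),u)\equiv 0$. Factoring $C(z,u) = (1-z/\rho(u))\,H(z,u)$ with $H$ analytic and nonzero at $(z_0,1)$, the decomposition in (i) becomes
\[ G(z,u) = A(z,u) + B(z,u)\,H(z,u)^{1/2}\bigl(1-z/\rho(u)\bigr)^{1/2}, \]
so $G(\cdot,u)$ exhibits a square-root branch point at $z=\rho(u)$, and by continuity $B(\rho(u),u)\,H(\rho(u),u)\neq 0$ for $u$ in a small neighborhood of $1$.

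The second step is uniform singularity analysis. Using the uniqueness of $z_0$ as a root of $C(\cdot,1)$ together with a Rouch\'e / Weierstrass-preparation argument, $\rho(u)$ remains the unique dominant singularity of $G(\cdot,u)$ inside a common $\Delta$-domain for all $u$ in a small complex neighborhood of $1$. The uniform bivariate transfer theorem of Flajolet-Odlyzko then yields
\[ [z^n]G(z,u) \;=\; \frac{-B(\rho(u),u)\,H(\rho(u),u)^{1/2}}{\Gamma(-1/2)}\,\rho(u)^{-n}\,n^{-3/2}\bigl(1+O(n^{-1})\bigr), \]
uniformly in $u$. Dividing by the value at $u=1$ produces the quasi-powers form
\[ \mathbb{E}\bigl[u^{\mathbb{X}_n}\bigr] \;=\; \Psi(u)\,\bigl(\rho(1)/\rho(u)\bigr)^{n}\bigl(1+O(n^{-1})\bigr), \]
with $\Psi$ analytic at $u=1$ and $\Psi(1)=1$. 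Hwang's quasi-powers theorem applied to this expansion delivers both the Gaussian convergence and the $O(n^{-1/2})$ Berry-Esseen bound, provided $\sigma^2\neq 0$.

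Finally the constants are identified by implicit differentiation of $C(\rho(u),u)=0$ at $u=1$. The first derivative gives $\rho'(1)=-C_{0,1}/C_{1,0}$, whence $\mu=-\rho'(1)/\rho(1)=C_{0,1}/(z_0 C_{1,0})$, matching the stated formula. Writing $\beta(u)=\log(\rho(1)/\rho(u))$, the quasi-powers theorem yields $\sigma^2=\beta''(1)+\beta'(1)$; a second implicit differentiation expresses $\rho''(1)$ in terms of the $C_{i,j}$ evaluated at $(z_0,1)$, and substituting back produces precisely the expression in~\eqref{stdev}. Its numerator is exactly the quantity appearing in condition (iii), which must therefore be nonzero to guarantee a non-degenerate Gaussian limit. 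The main obstacle is the uniformity in Step 2: one must control the geometry of the $\Delta$-domain and the constants in the transfer theorem's error bounds independently of $u$ in a complex neighborhood of $1$, using only the joint analyticity of $A,B,C$ on the fixed domain $\mathcal{D}$ supplied by (i) and the fact that $\rho(u)$ varies analytically without meeting other zeros of $C(\cdot,u)$ or singularities of $A,B$.
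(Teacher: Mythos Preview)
The paper does not give its own proof of this statement: it is quoted as Theorem~IX.12 of Flajolet--Sedgewick and stated without proof, specialized to the algebraic--logarithmic schema with a square-root singularity. There is therefore nothing in the paper to compare your argument against.

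That said, your sketch is the standard route used in Flajolet--Sedgewick to establish that theorem: locate the movable singularity $\rho(u)$ via the analytic implicit function theorem (using condition~(ii)), perform uniform singularity analysis in a common $\Delta$-domain to extract $[z^n]G(z,u)$ with a uniform $O(n^{-1})$ error, and then apply Hwang's quasi-powers theorem. Your identification of $\mu$ and $\sigma^2$ by implicit differentiation of $C(\rho(u),u)=0$ is also the standard computation, and you correctly note that condition~(iii) is exactly the non-degeneracy hypothesis $\sigma^2\neq 0$. The one point you flag as an obstacle---uniformity of the transfer estimates in $u$---is genuinely the delicate part, but it is handled in the reference by the compactness of the closed neighborhood of $u=1$ together with the joint analyticity of $A,B,C$ on the fixed product domain $\mathcal{D}$; no additional ideas are required beyond what you outline.
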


\subsection{Base pairs} 
To find the distribution of base pairs, we first find the bivariate generating function $S(z,u)$ where $u$ marks the base pairs. A base pair is added precisely when the rules $L \rightarrow dFd'$ and $F \rightarrow dFd'$ are used. So, $S(z,u)$ is the solution of the system
\begin{align}
S(z,u) &= p_1 L(z,u)S(z,u) + q_1L(z,u) \notag\\
L(z,u) &= p_2 z^2 u F(z,u) +q_2 z \\
F(z,u) &= p_3 z^2 u F(z,u) + q_3 L(z,u) S(z,u). \notag
\end{align}
Similarly as before, we can find an explicit formula for $S(z,u)$:
\[ S(z,u) = Q(z,u) - \frac{\sqrt{C^{bp}(z,u)}}{2 p_2q_3 z^2 u}\] where
\begin{eqnarray}
Q(z,u)&=&\frac{(1-p_1q_2z)(1-p_3z^2u)}{2 p_2q_3 z^2 u}, \notag\\
C^{bp}(z,u)&=&(1-p_1q_2z)^2(1-p_3z^2u)^2 - 4 p_2 q_1 q_2 q_3  z^3 u (1-p_3 z^2 u). \label{Cbp}
\end{eqnarray}
\begin{theorem} \label{bps}
Let $\mathbb{X}_n^{bp}$ be a random variable counting the number of basepairs in a secondary structure with $n$ nucleotides. If the probabilities $p_i, q_i>0$, $1 \leq i \leq 3$ are such that the polynomial $C^{bp}(z,u)$ given in~\eqref{Cbp} satisfies the condition~\eqref{variability}, then $\mathbb{X}_n^{bp}$ after standardization converges to a Gaussian variable.  The mean and standard deviation of $\mathbb{X}_n^{bp}$ are asymptotically linear in $n$. In particular,
\[ \mathbb{E}(\mathbb{X}_n^{bp}) \sim \frac{ \alpha}{\gamma} n,\] where
\begin{equation} \label{alphagamma} \alpha =  1-p_1q_2\rho_0, \hspace{1cm} \gamma = 3-p_1q_2\rho_0-p_3\rho_0^2-p_1p_3q_2\rho_0^3. \end{equation}
The first order approximation of the standard deviation is given by~\eqref{stdev} for $C=C^{bp}$ and $z_0=\rho_0$.
\end{theorem}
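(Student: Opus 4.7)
The plan is to invoke Theorem~\ref{fl} applied to the bivariate generating function $S(z,u)$. Reading off the explicit formula, I take the decomposition $A(z,u) = Q(z,u)$, $B(z,u) = -\frac{1}{2p_2 q_3 z^2 u}$, and $C(z,u) = C^{bp}(z,u)$; each is rational with poles only where $z u = 0$, so all three are analytic on a polydisk $\{|z|<r\}\times\{|u-1|<\epsilon\}$ for $r$ slightly larger than $\rho_0$ and $\epsilon$ small.

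For condition (i), the key observation is that $C^{bp}(z,1)$ factors as $(1-p_3 z^2) R(z)$ with $R$ as in~\eqref{R}. By Lemma~\ref{lm}, $\rho_0$ is the unique root of $R$ on the circle $|z|=\rho_0$ and satisfies $\rho_0 < 1/\sqrt{p_3}$; since the zero sets of $R$ and of $1-p_3z^2$ are finite, I can shrink $r$ so that $z_0 = \rho_0$ is the only root of $C^{bp}(z,1)$ inside $|z|<r$, and clearly $B(\rho_0,1)\neq 0$. For condition (ii), differentiating in $z$ and using $R(\rho_0)=0$ gives $C^{bp}_{1,0}(\rho_0,1) = R'(\rho_0)(1-p_3\rho_0^2)$, which is nonzero by Lemma~\ref{lm}. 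Differentiating in $u$ produces a two-term expression; substituting the identity $(1-p_1q_2\rho_0)^2(1-p_3\rho_0^2) = 4 p_2 q_1 q_2 q_3 \rho_0^3$ (which is exactly $R(\rho_0)=0$) into the first term cancels one copy of $1-2p_3\rho_0^2$ and collapses everything to $C^{bp}_{0,1}(\rho_0,1) = -4 p_2 q_1 q_2 q_3 \rho_0^3 \neq 0$. Condition (iii) is the variability condition, which is imposed as a hypothesis; genericity in $(p_1,p_2,p_3)$ will be addressed separately, as noted in Section~\ref{funf}.

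It remains to compute $\mu$ and to check that it equals $\alpha/\gamma$. From Theorem~\ref{fl}, $\mu = C_{0,1}/(z_0 C_{1,0})$ at $(\rho_0,1)$, which by the computations above equals
\[
\mu = \frac{-4 p_2 q_1 q_2 q_3 \rho_0^3}{\rho_0\, R'(\rho_0)(1-p_3\rho_0^2)}.
\]
The heart of the proof is to massage $R'(\rho_0)$ into a clean form. Differentiating $R$ explicitly and using $R(\rho_0)=0$ once more to rewrite $-12 p_2 q_1 q_2 q_3 \rho_0^2$ as $-3(1-p_1q_2\rho_0)^2(1-p_3\rho_0^2)/\rho_0$ allows me to factor out $-(1-p_1q_2\rho_0)/\rho_0 = -\alpha/\rho_0$; the remaining bracket is a polynomial in $p_1q_2\rho_0$ and $p_3\rho_0^2$ that telescopes exactly to $\gamma$. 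Hence $R'(\rho_0) = -\alpha\gamma/\rho_0$, and a second application of the identity $4 p_2 q_1 q_2 q_3 \rho_0^3 = \alpha^2(1-p_3\rho_0^2)$ in the numerator of $\mu$ cancels one factor of $\alpha$ and the factor $(1-p_3\rho_0^2)$, leaving $\mu = \alpha/\gamma$ as claimed.

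The standard deviation is then read off directly from~\eqref{stdev} with $C=C^{bp}$ and $z_0=\rho_0$, and the linear growth of $\mathbb{E}(\mathbb{X}_n^{bp})$ and its standard deviation in $n$ is built into the conclusion of Theorem~\ref{fl}. The main obstacle, then, is not structural but algebraic: the expression for $R'(\rho_0)$ is a priori a messy polynomial in $\rho_0$, and recognizing that the defining relation $R(\rho_0)=0$ collapses it into the compact form $-\alpha\gamma/\rho_0$ is what makes the mean land cleanly on $\alpha/\gamma$. This same strategy — write down $A,B,C$, verify (i)--(ii) via $R(\rho_0)=0$, assume (iii), and use the identity $R(\rho_0)=0$ to simplify — is what will be reused for helices and the various loop types later in Section~\ref{main}.
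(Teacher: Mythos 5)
Your proposal follows essentially the same route as the paper: apply Theorem~\ref{fl} with $C=C^{bp}$, use the factorization $C^{bp}(z,1)=R(z)(1-p_3z^2)$ together with Lemma~\ref{lm} to locate the unique dominant zero $\rho_0$ and to get $C^{bp}_{1,0}(\rho_0,1)=R'(\rho_0)(1-p_3\rho_0^2)\neq 0$ and $C^{bp}_{0,1}(\rho_0,1)=-4p_2q_1q_2q_3\rho_0^3\neq 0$, and take condition~\eqref{variability} as a hypothesis. Your extra algebra is correct and is actually a useful supplement: writing $a=p_1q_2\rho_0$, $b=p_3\rho_0^2$ and using $R(\rho_0)=0$ to replace $12p_2q_1q_2q_3\rho_0^2$ by $3(1-a)^2(1-b)/\rho_0$ gives $R'(\rho_0)=-\tfrac{1-a}{\rho_0}\bigl(2a(1-b)+2b(1-a)+3(1-a)(1-b)\bigr)=-\alpha\gamma/\rho_0$, and then $\mu=\alpha/\gamma$ follows; the paper states this value without showing the simplification.

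There is one technical slip. You apply Theorem~\ref{fl} to $S(z,u)$ itself with $A=Q$ and $B=-\tfrac{1}{2p_2q_3z^2u}$, and assert these are ``analytic on a polydisk $\{|z|<r\}\times\{|u-1|<\epsilon\}$'' because their poles lie only where $zu=0$ --- but $z=0$ lies inside that polydisk, so $A$ and $B$ as you have written them are \emph{not} analytic on the required domain, and condition~$(i)$ fails for your decomposition as stated. The paper avoids this by working with $G(z,u)=z^2S(z,u)$, whose coefficients are those of $S$ shifted by two, so that $A(z,u)=z^2Q(z,u)$ and $B(z,u)=-\tfrac{1}{2p_2q_3u}$ are entire in $z$; the shift changes neither the limit law nor the asymptotic mean and variance. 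This is a one-line repair, but as written your verification of condition~$(i)$ is not valid. A second, more minor omission: the paper also runs a compactness argument (via the analytic implicit function theorem) showing that $\rho(u)$ remains the unique smallest-modulus root of $C^{bp}(\cdot,u)$ for $u$ near $1$; your write-up only addresses $u=1$. Under the paper's stated form of conditions $(i)$--$(iii)$ this is arguably not required, but it is part of how the singularity-perturbation machinery behind Theorem~\ref{fl} is justified, so you should at least remark on it.
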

\begin{proof}

The random variables associated to $G(z,u)= z^2S(z,u)$ are the same as the ones associated to $S(z,u)$, only shifted in index. So, we will work with the function $G(z,u)$  and we will prove that it satisfies the conditions in Theorem~\ref{fl}. The functions $A(z,u) = z^2 P(z,u)$, $B(z,u) = -\frac{1}{2p_2q_3 u}$, and $C(z,u)=C^{bp}(z,u)$ are clearly analytic in the domain $\mathbb{C} \times \{|u-1| < \epsilon\}$ for  small $\epsilon>0$. Using  Lemma~\ref{lm}, we get

\begin{eqnarray*}
C_{1,0}(\rho_0,1)&=& R'(\rho_0)(1-p_3 \rho_0^2) <0 \\ 
C_{0,1} (\rho_0,1)
&=& -p_3\rho_0^2(1-p_3\rho_0^2)(1-p_1q_2\rho_0)^2 - 4p_2q_1q_2q_3\rho_0^3(1-p_3\rho_0^2) \\
&=& -4p_2q_1q_2q_3\rho_0^3<0.
\end{eqnarray*}
So, condition $(ii)$ is satisfied. By the analytic implicit function theorem, there exists an analytic function $\rho(u)$ defined on some neighborhood of $\rho_0$ such that $R(z,u) = 0$ for $(z,u)$ in a small polydisc $\Delta(\rho_0,  1, \epsilon)$ if and only if $z=\rho(u)$. Since, by Lemma~\ref{lm}, $\rho(1) = \rho_0 >1$,  $\epsilon$ can be chosen so that $\rho(u) >1$. 

We claim that if $|u-1|<\epsilon$, $z=\rho(u)$ is the root of smallest  modulus of $R(z,u)$ as a polynomial in $z$. There is a neighborhood of $u=1$ such that $z=\rho(u)$ is the unique zero of smallest modulus of $R(z,u)$. Otherwise, there exists a sequence $u_n \rightarrow 1$ and $\xi_n  \neq \rho(u_n)$ with $|\xi_n| \leq |\rho(u_n)|$ and $R(\xi_n, u_n) =0$. By passing to a subsequence, which we still denote by $(\xi_n)$, we obtain that there exists some $\xi_0$ such that $\lim_{n \rightarrow \infty} \xi_n = \xi_0$. By continuity, $R(\xi_0,1)=0$ and $|\xi_0| \leq \rho(1)$. Hence, by uniqueness, $\xi_0 = \rho(1)$. This contradicts the uniqueness of the solution $z=\rho(u)$ of $R(z,u)=0$ in a neighborhood of $u=1$ guaranteed by the implicit function theorem.

Finally, choose $\epsilon$ to be small enough so that $R(z)$ has a unique zero in $ |z| < \rho_0 + \epsilon$. Setting $r=\rho_0+\epsilon$ and $r_0 =1$, such that $R(z)$ has a unique zero make condition $(i)$ satisfied. Indeed, for $\mathcal{D} = \{ |z| < r\} \times \{|u-1| < \epsilon\}$, 
$C(z,1) = R(z)(1-p_3z^2)$ has a unique zero in $\mathcal{D}$ and clearly $B(\rho_0,1) \neq 0$. 

\end{proof}

\subsection{Helices}

A helix is started when the rule $L \rightarrow dFd'$ is used. If $u$ marks the number of helices in the secondary structure, the relation between the probability generating functions is
\begin{align*}
S(z,u) &= p_1 L(z,u)S(z,u) + q_1L(z,u), \\
L(z,u) &= p_2 z^2 u F(z,u) +q_2 z ,\\
F(z,u) &= p_3 z^2 F(z,u) + q_3 L(z,u) S(z,u),
\end{align*}
and therefore, 
\[S(z,u) = Q(z,u) - \frac{\sqrt{C^{h}(z,u)}}{2 p_2q_3 z^2 u},\]
where 
\begin{eqnarray}
Q(z,u)&=&\frac{(1-p_1q_2z)(1-p_3z^2)}{2 p_2q_3 z^2 u}, \notag\\
C^{h}(z,u)&=&(1-p_1q_2z)^2(1-p_3z^2)^2 - 4 p_2 q_1 q_2 q_3  z^3 u (1-p_3 z^2). \label{Ch}
\end{eqnarray}
\begin{theorem} \label{hel}
Let $\mathbb{X}_n^{hel}$ be the number of helices in a random secondary structure with $n$ nucleotides. If the probabilities $p_i, q_i>0$, $1 \leq i \leq 3$ are such that the polynomial $C^{h}(z,u)$ given in~\eqref{Ch} satisfies the condition~\eqref{variability}, then $\mathbb{X}_n^{h}$ after standardization converges to a Gaussian variable.  In particular,
\[ \mathbb{E}(\mathbb{X}_n^{hel}) \sim  \frac{\alpha \beta}{\gamma} n,\] where $\alpha$ and $\gamma$ are given by~\eqref{alphagamma} and
\begin{equation} \label{beta} \beta = 1-p_3\rho_0^2. \end{equation}
The first order approximation of the standard deviation is given by~\eqref{stdev} for $C=C^{hel}$ and $z_0=\rho_0$.
\end{theorem}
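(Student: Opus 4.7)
The plan is to apply Theorem~\ref{fl} to the shifted bivariate generating function $G(z,u)=z^2 S(z,u)$, closely following the template of the proof of Theorem~\ref{bps}. Writing
\[
G(z,u) = \frac{(1-p_1q_2z)(1-p_3z^2)}{2p_2q_3u} - \frac{1}{2p_2q_3u}\sqrt{C^{h}(z,u)},
\]
I would take $A(z,u)$ to be the rational first summand, $B(z,u) = -\tfrac{1}{2p_2q_3u}$, and $C(z,u) = C^{h}(z,u)$. Both $A$ and $B$ are analytic on any polydisc $\{|z|<r\}\times\{|u-1|<\epsilon\}$ with $\epsilon$ small enough to keep $u$ bounded away from $0$.

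The decisive observation is that setting $u=1$ gives $C^{h}(z,1) = R(z)(1-p_3z^2)$, which is precisely the factorization that appeared in the base pair proof. Consequently Lemma~\ref{lm} applies verbatim: $\rho_0$ is the unique zero of $C^{h}(z,1)$ in some disc $|z|<r$ with $\rho_0 < r < 1/\sqrt{p_3}$, and it is a simple zero because $C^{h}_{1,0}(\rho_0,1) = R'(\rho_0)(1-p_3\rho_0^2)<0$. The extension of $\rho(u)$ to an analytic function on a neighborhood of $u=1$, together with the argument that $\rho(u)$ remains the root of smallest modulus of $C^{h}(\cdot,u)$, transfers verbatim from the base pair case; this establishes condition $(i)$ along with $B(\rho_0,1)\neq 0$. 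For condition $(ii)$ I would compute
\[
C^{h}_{0,1}(\rho_0,1) = -4p_2q_1q_2q_3\rho_0^3(1-p_3\rho_0^2)<0,
\]
which together with the sign of $C^{h}_{1,0}(\rho_0,1)$ gives $C_{1,0}C_{0,1}\neq 0$. Condition $(iii)$ is precisely the hypothesis of the theorem on $C^{h}$.

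Once the three conditions are in place, Theorem~\ref{fl} yields the Gaussian limit with mean
\[
\mu \;=\; \frac{C^{h}_{0,1}(\rho_0,1)}{\rho_0\,C^{h}_{1,0}(\rho_0,1)} \;=\; \frac{-4p_2q_1q_2q_3\rho_0^3(1-p_3\rho_0^2)}{\rho_0\,R'(\rho_0)(1-p_3\rho_0^2)} \;=\; \frac{-4p_2q_1q_2q_3\rho_0^2}{R'(\rho_0)},
\]
and the standard deviation by direct substitution of $C=C^{h}$, $z_0=\rho_0$ into~\eqref{stdev}. To reach the claimed closed form $\mu = \alpha\beta/\gamma$, I would use two polynomial identities at $z=\rho_0$: first, $R(\rho_0)=0$ rewrites $4p_2q_1q_2q_3\rho_0^3 = \alpha^2\beta$, and second, the identity $\rho_0 R'(\rho_0) = -\alpha\gamma$ (verified by expanding $R'(z) = -2p_1q_2(1-p_1q_2z)(1-p_3z^2)-2p_3z(1-p_1q_2z)^2-12p_2q_1q_2q_3z^2$ at $\rho_0$ and eliminating $12p_2q_1q_2q_3\rho_0^2$ via $R(\rho_0)=0$). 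Combining these gives $\mu = -\alpha^2\beta/(\rho_0 R'(\rho_0)) = \alpha\beta/\gamma$. Because $C^{h}(z,1)$ coincides with $C^{bp}(z,1)$, the singularity analysis at $u=1$ is literally the same as for base pairs; the main (and essentially only) computational step particular to this theorem is the partial $C^{h}_{0,1}$ and the final algebraic reduction $-4p_2q_1q_2q_3\rho_0^2/R'(\rho_0)=\alpha\beta/\gamma$, so I do not anticipate any genuine obstacle beyond this bookkeeping.
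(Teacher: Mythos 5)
Your proposal is correct and follows essentially the same route as the paper, which simply notes that conditions $(i)$ and $(ii)$ of Theorem~\ref{fl} transfer from the base-pair case via the factorization $C^{h}(z,1)=R(z)(1-p_3z^2)$ and that $(iii)$ holds by hypothesis. Your extra algebraic verification of $\mu=\alpha\beta/\gamma$ via $4p_2q_1q_2q_3\rho_0^3=\alpha^2\beta$ and $\rho_0R'(\rho_0)=-\alpha\gamma$ is accurate and merely fills in bookkeeping the paper leaves implicit.
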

\begin{proof} Similarly as in the proof of Theorem~\ref{bps}, the conditions $(i)$ and $(ii)$ from Theorem~\ref{fl} are satisfied for the function $G(z,u) = z^2 S(z,u)$. Condition $(iii)$ is satisfied by assumption. 
\end{proof}

\subsection{Loops} In this subsection, let $S(z,x,y,u,v,w)$ be the multivariable probability generating function for RNA structures where $x$ marks hairpin loops, $y$ marks multi-branch loops, $u$ marks left bulges, $v$ marks right bulges, and $w$ marks internal loops.

A loop starts exactly when a helix ends, so each application of the rule  $F \rightarrow LS$ starts one loop. 
The loop started will be a hairpin loop if this rule is followed by $LS \ras s^n$, $ n\geq 2$. To find the probabilities of a  hairpin loop of length $n \geq 2$ we note that
\begin{eqnarray*}
P(LS \ras s^n) &=& P(L \ra s) P(S \ras s^{n-1}) \\ &=& q_2 P(S \ra LS) P(L \ra s) P(S \ras s^{n-2} )\\ &=& p_1 q_2^2 P(S \ras s^{n-2}) \\ &=& q_1 q_2^2 (p_1 q_2)^{n-2}
\end{eqnarray*}
Therefore the probability generating function for the hairpin loops that could be formed is
\[ \sum_{n=2}^{\infty} q_1 q_2^2 (p_1 q_2)^{n-2} z^n = \frac{q_1 q_2^2 z^2}{1 - p_1 q_2 z} =:H_h. \]
Right bulges are formed when the derivation that follows is of the form $LS \ras dFd' s^{l}$, $ l \geq 1$. Their probability is 
\[P(LS \ras dFd' s^{l} ) = P(L \rightarrow dFd' ) P( S \ras s^l) = p_2 q_1 q_2 (p_1 q_2)^{l-1} \] and their contribution to the generating function is 
\[\sum_{l=1}^{\infty}  z^{l+2} p_2 q_1 q_2 (p_1 q_2)^{l-1} F = \frac{p_2 q_1 q_2   z^3}{ 1- p_1 q_2 z} F =: H_b F .\] Similarly, left bulges are formed by applications of rules that yield $LS \ras s^ k dFd' $, $ k \geq 1$. The probability of the left bulges together with all successive derivations is
\begin{eqnarray*}
P(LS \ras s^ k dFd' ) &=& P(LS \ras s^k S) P( S \rightarrow L)P(L \rightarrow dFd') \\ &=&  p_2 q_1 P(LS \ras s^k S) \\
&=& p_2 q_1 P(LS  \ras sS) P(S \rightarrow LS) P(LS \ras s^{k-1}S) \\
 &=&p_2 q_1  q_2 (p_1 q_2)^{k-1}.
 \end{eqnarray*} The part of the generating function that corresponds to the left bulges is 
\[\sum_{k=2}^{\infty}  z^{k+2} p_2 q_1  q_2 (p_1 q_2)^{k-1} F  = \frac{p_2 q_1  q_2  z^3}{1- p_1 q_2  z} F = H_b F .\]
Internal loops are created when the rule $F \rightarrow LS$ is followed by $LS \ras s^k dFd' s^{l}$, for some $ k, l \geq 1$.
\begin{eqnarray*}
P(LS \ras s^k dFd' s^{l}) &=& P(LS \ras s^k S) P( S \rightarrow LS) P( LS \ras dFd' s^{l}) \\
&=&   q_2 (p_1 q_2)^{k-1} p_1 p_2 q_1 q_2 (p_1 q_2)^{l-1} \\
&=& p_1 p_2 q_1 q_2^2 (p_1 q_2)^{k-1}  (p_1 q_2)^{l-1} \end{eqnarray*}
and their contribution to the generating function is
\[ \sum_{k=1}^{\infty} \sum_{l=1}^{\infty} z^{k+l+2} p_1 p_2 q_1 q_2^2 (p_1 q_2)^{k-1}  (p_1 q_2)^{l-1} F =  \frac{p_1 p_2 q_1 q_2^2  z^4}{(1-p_1q_2  z)^2} F =: H_i F. \]
The remaining part of $LS$ corresponds to the substructures that begin with a multi-branch loop. Their contribution is
\[H_m:=LS - H_h - 2H_bF - H_iF. \]
Using this, the translation of the grammar rules yields the system
\begin{eqnarray*}
S &=& p_1 LS + q_1 L,\\
L &=& p_2 z^2 F + q_2 z,\\
F &=& p_3 z^2 F + q_3 (xH_h + uH_bF+vH_bF + wH_iF+ yH_m),
\end{eqnarray*}
and, by eliminating $F$ and $L$, we get that $S(z,x,y,u,v,w)$ is a solution to the quadratic equation
\begin{eqnarray*} p_2 q_3 z^2
y S^2 &+& ( p_1p_2q_3  z^2 xH_h -  p_1 p_2 q_3z^2 yH_h + p_1 q_2 z H_e-H_e ) S \\ &+& (p_2 q_1q_3 z^2 x H_h -  p_2q_1q_3 z^2 y H_h + q_1 q_2 z H_e) = 0\end{eqnarray*}
where
\[H_e := 1-p_3z^2+q_3(y-u)H_{b} + q_3(y-v)H_{b}+q_3(y-w)H_i.\]

\begin{theorem} \label{lps}
Let $\mathbb{X}^{hp}_n$, $\mathbb{X}^{lb}_n$, $\mathbb{X}^{rb}_n$, $\mathbb{X}^i_n$, and $\mathbb{X}^m_n$ be  the number of hairpin loops, left bulges, right bulges, internal loops,  and multi-branch loops in a random secondary structure with $n$ nucleotides, respectively.  For $\star \in \{hp, lb, rb, i, m\}$, if the probabilities $p_i, q_i>0$, $1 \leq i \leq 3$ are such that a certain polynomial $C^{\star}(z,u)$  satisfies the condition~\eqref{variability}, then  $\mathbb{X}^{\star}_n$ after standardization converges to a Gaussian variable. The approximate expectations are explicitly given by
\begin{eqnarray*}
\mathbb{E}(\mathbb{X}_n^{hp}) &\sim& \frac{ (1+p_1q_2\rho_0) \alpha \beta }{4 \gamma} n, \\
\mathbb{E}(\mathbb{X}_n^{lb}) = \mathbb{E}(\mathbb{X}_n^{rb}) &\sim& \frac{\alpha^2 \beta}{4\gamma} n, \\
\mathbb{E}(\mathbb{X}_n^{i}) &\sim& \frac{p_1q_2\rho_0 \alpha \beta}{4 \gamma} n, \\
\mathbb{E}(\mathbb{X}_n^{m}) &\sim& \frac{ \alpha \beta}{4\gamma} n, 
\end{eqnarray*}
where $\alpha, \beta$, and $\gamma$ are given by~\eqref{alphagamma} and~\eqref{beta}. The first order approximations of the standard deviations are given by~\eqref{stdev} for $C=C^{\star}(z,u)$ and $z_0=\rho_0$.
\end{theorem}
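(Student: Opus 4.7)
The plan is to run the template of Theorems~\ref{bps} and~\ref{hel} five times, once for each $\star \in \{hp, lb, rb, i, m\}$. For a fixed $\star$, I would set to $1$ every marking variable in $S(z,x,y,u,v,w)$ except the one labelling $\star$, obtaining a bivariate function $S^\star(z,u)$ that still solves a quadratic in $S$ with coefficients that are polynomial in $z$ and rational in $u$ (via $H_h$, $H_b$, $H_i$, which are themselves rational in $z$). Solving this quadratic and selecting the branch with $S^\star(0,1)=0$ gives the standard form $S^\star(z,u)=A^\star(z,u)+B^\star(z,u)\sqrt{C^\star(z,u)}$ required by Theorem~\ref{fl}, with $C^\star$ the discriminant.

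The critical observation that makes condition (i) essentially automatic is that when $u=1$ every ``extra'' marker also equals $1$, so $H_e$ collapses to $1-p_3z^2$ and the quadratic defining $S^\star(z,1)$ reduces to the equation for $S(z)$ from Section~\ref{singanal}. Hence $C^\star(z,1)=R(z)(1-p_3z^2)$ for every $\star$, and by Lemma~\ref{lm} its unique smallest-modulus zero is $z_0=\rho_0$. The polydisc-analyticity and uniqueness-of-$\rho(u)$ arguments transplant verbatim from the proof of Theorem~\ref{bps}. For condition (ii), the partial
\[ C^\star_{1,0}(\rho_0,1) = R'(\rho_0)(1-p_3\rho_0^2) \]
is nonzero by Lemma~\ref{lm}; for $C^\star_{0,1}(\rho_0,1)$ I would differentiate the quadratic defining $S^\star$ implicitly in $u$ at $(\rho_0,1)$, using $H_h(\rho_0)$, $H_b(\rho_0)$, $H_i(\rho_0)$ together with $R(\rho_0)=0$, and verify that each of the five resulting expressions is a nonzero polynomial in $\alpha$, $\beta$, and $p_1q_2\rho_0=1-\alpha$. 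Condition (iii) is imposed by hypothesis.

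The means then follow from $\mu^\star = C^\star_{0,1}(\rho_0,1)/(\rho_0 C^\star_{1,0}(\rho_0,1))$. Eliminating $4p_2q_1q_2q_3\rho_0^3=\alpha^2\beta$ (from $R(\rho_0)=0$) inside the formula for $\rho_0 R'(\rho_0)$ produces the common denominator $\gamma$ of~\eqref{alphagamma}, while the five numerators come out proportional to $(1+p_1q_2\rho_0)\alpha\beta$, $\alpha^2\beta$, $\alpha^2\beta$, $p_1q_2\rho_0\,\alpha\beta$, and $\alpha\beta$ respectively. The left/right-bulge expectations coincide because $H_b$ enters the quadratic symmetrically in the two corresponding markers. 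Standard deviations are read off from~\eqref{stdev} with $C=C^\star$.

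The main obstacle is the bookkeeping for $C^\star_{0,1}(\rho_0,1)$: the multi-branch case is the most delicate because $H_m=LS-H_h-2H_bF-H_iF$ is defined implicitly, so differentiation in $y$ introduces sign-flipped contributions from all three of $H_h$, $H_b$, $H_i$, which must then be collapsed using the univariate identities $S(\rho_0)=Q(\rho_0)$ and $L(\rho_0)S(\rho_0)+q_3^{-1}(p_3\rho_0^2-1)F(\rho_0)=0$ obtained from evaluating the original grammar system at $z=\rho_0$. Everything else is routine but tedious polynomial simplification.
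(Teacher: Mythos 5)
Your proposal follows essentially the same route as the paper: specialize the multivariate generating function to one marker at a time, solve the resulting quadratic, apply Theorem~\ref{fl} to $z^2S^\star(z,u)$ using Lemma~\ref{lm} to locate the dominant singularity at $\rho_0$ and to verify conditions $(i)$ and $(ii)$, and read off the means from $C^\star_{0,1}/(\rho_0 C^\star_{1,0})$ with the simplification $4p_2q_1q_2q_3\rho_0^3=\alpha^2\beta$ and $-\rho_0R'(\rho_0)=\alpha\gamma$. One caveat: your claim that $C^\star(z,1)=R(z)(1-p_3z^2)$ for every $\star$ holds only for your normalization of $C^\star$ as the raw discriminant, which is rational in $z$ because $H_h$, $H_b$, $H_i$ have denominators $(1-p_1q_2z)$ or $(1-p_1q_2z)^2$; the paper clears these denominators so that $C^\star$ is a genuine polynomial (as the theorem statement requires), at the price of extra factors, e.g.\ $C^{hp}(z,1)=R(z)(1-p_1q_2z)^2(1-p_3z^2)$ and $C^{m}(z,1)=R(z)(1-p_1q_2z)^4(1-p_3z^2)$. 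Since $\rho_0<1/(p_1q_2)$ by Lemma~\ref{lm}, these factors do not vanish at $\rho_0$, so the dominant singularity, the ratio giving $\mu$, and the expression~\eqref{stdev} are all unaffected, but you should make the normalization explicit so that condition~\eqref{variability} in the hypothesis refers to a definite polynomial.
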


\begin{proof}
By setting $y=u=v=w=1$, for hairpins we get that 
\[S(z,x) = Q(z,x) - \frac{\sqrt{C^{hp}(z,x)}}{2p_2q_3z^2(1-p_1q_2z)}, \] where
\begin{eqnarray*}
Q(z,x)&&=\frac{(1-p_1q_2z)^2(1-p_3z^2) - p_1p_2q_1q_2^2q_3z^4(x-1)}{2p_2q_3z^2(1-p_1q_2z)},\\
C^{hp}(z,x)&&=\left( p_1p_2q_1q_2^2q_3z^4(x-1)- (1-p_1q_2z)^2(1-p_3z^2) \right)^2 \\&& - 4 p_2 q_1 q_2 q_3  z^3  (1-p_1q_2 z) \left( p_2q_1 q_2 q_3 z^3 (x-1) + (1-p_1q_2z) (1-p_3z^2) \right).
\end{eqnarray*}
To prove the claim for $\mathbb{X}_n^{hp}$, we work with the function $G(z,x) = z^2 S(z,x)$. The functions in condition $(i)$ of Theorem~\ref{fl}, are $A(z,x) = P(z,x)z^2$, $B(z,x) = -\frac{1}{2p_2 q_3 (1-p_1q_2z)}$, and $C^{hp}(z,x)$. They are all analytic in some polydisc around $(0,1)$.  Since \begin{eqnarray*}C^{hp}(z,1) &=& (1-p_1q_2z)^4(1-p_3z^2)^2 - 4 p_2 q_1 q_2 q_3  z^3  (1-p_1q_2 z)^2 (1-p_3z^2) \\ &=& R(z) (1-p_1q_2 z)^2 (1-p_3z^2), \end{eqnarray*}  it follows from Lemma~\ref{lm} that the smallest zero of $C^{hp}(z,1)$ is $\rho_0$ and that 
\begin{eqnarray*}C^{hp}_{1,0}(\rho_0,1)&=&R'(\rho_0) (1-p_1q_2 \rho_0)^2 (1-p_3 \rho_0^2),  \\
C^{hp}_{0,1} (\rho_0,1)&=&-2p_1p_2q_1q_2^2q_3\rho_0^4(1-p_1q_2 \rho_0)^2 (1-p_3 \rho_0^2) - 4 p_2^2q_1^2q_2^2q_3^2 \rho_0^6 (1-p_1q_2 \rho_0)
\end{eqnarray*} are both negative.

By setting $x=y=v=w=1$, for left bulges we get that 
\[S(z,u) = Q(z,u) - \frac{\sqrt{C^{lb}(z,u)}}{2p_2q_3z^2(1-p_1q_2z)}, \] where
\begin{eqnarray*}
Q(z,u)&=&\frac{(1-p_1q_2z)(1-p_3z^2) + p_2q_1q_2q_3z^3(1-u)}{2p_2q_3z^2},\\
C^{lb}(z,u)&=&\left( (1-p_1q_2z)^2(1-p_3z^2) + p_2q_1q_2q_3z^3(1-u)(1-p_1q_2z) \right)^2 \\&& - 4 p_2 q_1 q_2 q_3  z^3  (1-p_1q_2 z) \left( (1-p_1q_2z) (1-p_3z^2)  + p_2q_1 q_2q_3 z^3 (1-u)  \right).
\end{eqnarray*}
To prove the claim for $\mathbb{X}_n^{lb}$, we apply Theorem~\ref{fl}  to $G(z,u) = z^2 S(z,u)$. The functions in condition $(i)$ are $A(z,u) = P(z,u)z^2$, $B(z,u) = -\frac{1}{2p_2 q_3 (1-p_1q_2z)}$, and $C^{lb}(z,u)$. The conditions of Theorem~\ref{fl} can be checked as before by using the fact that $$C^{lb}(z,1) = R(z) (1-p_1q_2z)^2 (1-p_3z^2).$$

The proof for right bulges is exactly the same as the one for left bulges and $C^{rb}(z,u)=C^{lb}(z,u)$.

For interior loops, we set $x=y=u=v=1$ and  we get 
\[S(z,w) = Q(z,w) - \frac{\sqrt{C^{i}(z,w)}}{2p_2q_3z^2(1-p_1q_2z)}, \] where
\begin{eqnarray*}
Q(z,w)&=&\frac{(1-p_1q_2z)^2(1-p_3z^2) + p_1p_2q_1q_2^2q_3z^4(1-w)}{2p_2q_3z^2(1-p_1q_2z)},\\
C^{i}(z,w)&=&\left( (1-p_1q_2z)^3(1-p_3z^2) + p_1p_2q_1q_2^2q_3z^4(1-w)(1-p_1q_2z) \right)^2 \\&&-4 p_2 q_1 q_2 q_3  z^3  (1-p_1q_2z)^4 (1-p_3z^2) \\ &&- 4p_1p_2^2q_1^2 q_2^3 q_3^2z^7 (1-w)(1-p_1q_2 z)^2.
\end{eqnarray*}
As in the previous cases one can show that $G(z,w) = z^2 S(z,w)$ satisfies the conditions of Theorem~\ref{fl} by setting the functions in condition $(i)$ to be $A(z,w) = P(z,w)z^2$, $B(z,w) = -\frac{1}{2p_2 q_3 (1-p_1q_2z)}$, and $C^{i}(z,w)$.  Additionally, the factorization $$C^{i}(z,1) = R(z)(1-p_1q_2z)^4 (1-p_3z^2) $$ is used.

The case for multi-branch loops is similar. For completeness, we give the formula for $S(z,y)$. 
\[S(z,y) = Q(z,y) - \frac{\sqrt{C^{m}(z,y)}}{2p_2q_3z^2y(1-p_1q_2z)^2}, \] where
\begin{eqnarray*}
Q(z,y)&=&\frac{(1-p_1q_2z)^2(1-p_3z^2) - 2p_2q_1q_2q_3z^3(1-y)(1-p_1q_2z)}{2p_2q_3z^2y},\\
C^{m}(z,y)&=&\left( (1-p_1q_2z)^3(1-p_3z^2) - 2p_2q_1q_2q_3z^3(1-y)(1-p_1q_2z) \right)^2 \\&& - 4 p_2 q_1 q_2 q_3  z^3 y  (1-p_1q_2z)^4(1-p_3z^2) \\ && + 4p_2^2q_1^2 q_2^2 q_3^2 z^6 y (1-y)(1-p_1q_2 z)^2  .
\end{eqnarray*}
The claim for $\mathbb{X}_n^m$ follows from Theorem~\ref{fl} for the function $G(z,y) = z^2 S(z,y)$. Then the functions in condition $(i)$ are $A(z,y) = P(z,y)z^2$, $B(z,y) = -\frac{1}{2p_2 q_3 y (1-p_1q_2z)^2}$, and $C^{m}(z,y)$. When checking the conditions, one uses that $$C^{m}(z,1) = R(z)(1-p_1q_2z)^4 (1-p_3z^2).$$
\end{proof}

\subsection{Multi-branch loops with fixed degree}\label{fixedbranching}
In this subsection we  compute the expected number of multi-branch loops of a fixed degree $r \geq 2$. A multi-branch loop has a degree $r$ if it contains $r+1$ base pairs. 

Let $r \geq 2$ be fixed. Starting with $LS$, to get a multi-branch loop of degree $r$ with single-stranded segments of lengths $k_0, k_1, \dots, k_r$  $(k_i \geq 0)$, one needs to apply the rule $S \rightarrow LS$ exactly $r-2 + \sum_{i=0}^{r} {k_i} $ times and the rule $S \rightarrow L$ exactly once. After this one has $r+ \sum_{i=0}^{r} {k_i} $ copies of L. Then one applies the rule $L \rightarrow s$ exactly $\sum_{i=0}^{r} {k_i}$ times to get the single-stranded nucleotides, and the rule $ L \rightarrow dFd'$ precisely $r$ times to get the $r$ helices. 
Therefore, if  $z$ marks the number of nucleotides and $t$ marks the number of multi-branch loops of degree $r$,  the total weight of all substructures with $r$ branches and prescribed lengths of single-stranded segments that can be derived with this process is
\[ p_1^{r-2+\sum_{i=0}^{r} {k_i} } p_2^ {r} q_1 q_2^{\sum_{i=0}^{r} {k_i}} t z^{2r+\sum_{i=0}^{r} {k_i}} F^r\] and the total weight of all substructures starting with a multi-branch loops of degree $r$ is 
\[ \sum_{k_0, k_1, \dots, k_r \geq 0} p_1^{r-2+\sum_{i=0}^{r} {k_i} } p_2^ {r} q_1 q_2^{\sum_{i=0}^{r} {k_i}} t z^{2r+\sum_{i=0}^{r} {k_i}} F^r = \frac{ p_1^{r-2} p_2^r q_1 t z^{2r} F^{r} } { (1-p_1 q_2 z)^{r+1}}.\]
Translation of the grammar into generating functions yields the system
\begin{eqnarray}
 S &=& p_1 LS + q_1 L, \label{m1}\\
 L &=& p_2 z^2 F + q_2 z, \label{m2}\\
F &=& p_3 z^2 F + q_3     \frac{ p_1^{r-2} p_2^r q_1 z^{2r} t F^{r} } { (1-p_1 q_2 z)^{r+1} }+ q_3 \left( LS - \frac{p_1^{r-2} p_2^r q_1  z^{2r} F^{r} } { (1-p_1 q_2 z )^{r+1}}\right). \label{m3} \end{eqnarray}
 For convenience, set \[T_r(z) =  \frac{p_1^{r-2} p_2^r q_1 q_3  z^{2r}  } { (1-p_1 q_2 z )^{r+1}}. \]
Then equation~\eqref{m3} can be rewritten as
\begin{equation} \label{m4} F =  p_3 z^2 F  + (t -1)T_r F^r + q_3 LS .\end{equation}
Multiplying  equations~\eqref{m1} and~\eqref{m2} we get
\[ LS = p_1LS (p_2 z^2 F + q_2 z) + q_1 (p_2 z^2 F + q_2 z)^2 \]
and hence
\[LS = \frac{q_1(p_2 z^2 F + q_2 z)^2}{1-p_1p_2 z^2 F - p_1q_2 z}. \]
Substituting back to~\eqref{m4}, we get:
\[F =  p_3 z^2 F  + (t -1)T_r F^r + q_3  \frac{q_1(p_2 z^2 F + q_2 z)^2}{1-p_1p_2 z^2 F - p_1q_2 z}. \]
which is equivalent to
\begin{eqnarray*} p_1p_2z^2(t-1) T_r F^{r+1} -(t-1)(1 - p_1q_2 z)T_rF^r + ( p_1p_2p_3 z^4 - p_2^2 q_1 q_3 z^4 -p_1p_2 z^2) F^2  \\
\hspace{0.2cm} + (1-p_1q_2 z- p_3 z^2 + p_1  p_3q_2 z^3  - 2p_2q_1q_2q_3 z^3) F - q_1q_2^2 q_3z^2=0. \end{eqnarray*} 
After differentiating with respect to $t$, we find that $F_t'(z,1)$ is equal to
\begin{equation}\label{f'}  \frac{T_r F^r(z,1) (1  - p_1q_2 z -p_1p_2 z^2 F(z,1))}{2 p_2z^2[( p_1p_3-p_2 q_1 q_3) z^2-p_1]F(z,1) + [ q_2(p_1  p_3- 2p_2q_1q_3) z^3- p_3 z^2 -p_1q_2 z+1]} \end{equation}
and from here we can easily find  the function $F$ at $t=1$, which we will need later. Namely, 
\begin{eqnarray*} && p_2z^2( p_1  - p_1 p_3 z^2 + p_2 q_1 q_3 z^2) F^2(z,1) \\
&& -(1- p_1 q_2 z - p_3 z^2 + p_1 p_3 q_2 z^3 - 2 p_2 q_1 q_2 q_3 z^3 )F(z,1)  + q_1 q_2 ^2 q_3 z^2 =0 \end{eqnarray*}
and hence after simplifications we find that
\begin{equation} \label{fz} F(z) = \frac{(1- p_1 q_2 z - p_3 z^2 + p_1 p_3 q_2 z^3 - 2 p_2 q_1 q_2 q_3 z^3) - \sqrt{(1-p_3z^2)R(z)}}{2 p_2 z^2  ( p_1  - p_1  p_3 z^2 + p_2 q_1 q_3 z^2)}. \end{equation}
The solution with the negative sign is chosen because $F(0)=0$. From the explicit formula for $F(z)$, we note that the dominant singularity is again $\rho_0$. Indeed, $F(z)$ has a positive dominant singularity, since it has positive coefficients and if $z_0 < \rho_0$ is a positive solution to the quadratic  $p_1  - p_1  p_3 z^2 + p_2 q_1 q_3 z^2 =0$, we get
\begin{eqnarray*}&&(1- p_1 q_2 z+0 - p_3 z_0^2 + p_1 p_3 q_2 z_0^3 - 2 p_2 q_1 q_2 q_3 z_0^3) - \sqrt{(1-p_3z_0^2)R(z)} \\&&{} = (1-p_3z_0^2)(1+p_1q_2z_0)-\sqrt{(1-p_3z_0^2)^2(1+p_1q_2z_0)^2} =0. \\ \end{eqnarray*}
Combining~\eqref{m1} and~\eqref{m4} yields
\begin{equation*}S= \frac{p_1}{q_3} ( F -  p_3 z^2 F - (t -1)T_r F^r) + q_1 (p_2 z^2 F + q_2 z) \end{equation*}
and hence
\begin{equation} \label{s'} S'_t(z,1) = \frac{p_1}{q_3} ( F'_t(z,1) -  p_3 z^2 F'_t(z,1) - T_r F^r(z,1)) +  p_2 q_1 z^2 F'_t(z,1). \end{equation}
In light of~\eqref{fz}, formula~\eqref{f'} simplifies to
\[F_t'(z,1) = \frac{T_r F^r(z,1) (1  - p_1q_2 z -p_1p_2 z^2 F(z,1))}{\sqrt{(1-p_3z^2)R(z)}}\]
and plugging this into~\eqref{s'} yields
\[S_t'(z,1) = \frac{T_r F^r(z,1)}{2q_3}\left( \frac{p_1^2p_3q_2z^3-p_1p_3z^2+2p_2q_1q_3z^2-p_1^2q_2z+p_1}{\sqrt{(1-p_3z^2)R(z)}} -p_1\right).\]
Using this expression and Theorem~\ref{fo}, we can estimate the coefficients of $S_t'(z,1)$:
\begin{equation}\label{sn'} [z^n]S'(z,1) \sim \frac{K}{ \Gamma(1/2)} n^{1/2} \rho_0^{-n}, \end{equation}
where \[ K=  \frac{p_1^{r-2}q_1q_2^{r-1}\rho_0^{r-1}(1-p_3\rho_0^2)}{ 4\sqrt{-\rho_0 (1-p_3 \rho_0^2) R'(\rho_0)}(1+p_1q_2\rho_0)^{r-1}}.\]
 Combining this estimate with~\eqref{sn} we get the following theorem.
 \begin{theorem} Let $\mathbb{X}_n^{m,r}$ be the number of multi-branch loops of degree $r$ in a random secondary structure with $n$ nucleotides. If the probabilities $p_i, q_i$ are all non-zero, then
  \[ \mathbb{E}(\mathbb{X}_n^{m,r}) \sim \frac{p_1^{r-2}q_2^{r-2}\rho_0^{r-2}(1-p_1q_2\rho_0)(1-p_3\rho_0^2)}{4(1+p_1q_2\rho_0)^{r-1}(3-p_1q_2\rho_0-p_3\rho_0^2-p_1p_3q_2\rho_0^3)}  n.\]
 \end{theorem}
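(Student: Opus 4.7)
The plan is to apply the standard singularity analysis recipe to $[z^n]S'_t(z,1)$ and then to divide by the already-derived estimate~\eqref{sn} for $[z^n]S(z)$. By construction of the bivariate generating function $S(z,t)$ with $t$ marking degree-$r$ multi-branch loops, one has $S(z,1) = S(z)$ and
\[
\mathbb{E}(\mathbb{X}_n^{m,r}) \;=\; \frac{[z^n]\partial_t S(z,1)}{[z^n]S(z)}.
\]
The closed-form expression for $S'_t(z,1)$ has already been obtained in the preceding paragraphs, so only singularity analysis and algebraic bookkeeping remain.

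First I would identify the dominant singularity of $S'_t(z,1)$. Inspecting the derived formula, the only source of singular behavior in a suitable $\Delta$-domain is the factor $1/\sqrt{(1-p_3z^2)R(z)}$. By Lemma~\ref{lm}, $\rho_0$ is the unique root of $R(z)$ on the circle $|z|=\rho_0$, and $\rho_0 < 1/\sqrt{p_3}$, so $1-p_3\rho_0^2 > 0$. Since $R'(\rho_0) < 0$, one has $\sqrt{R(z)} \sim \sqrt{-\rho_0 R'(\rho_0)}\,(1-z/\rho_0)^{1/2}$ as $z\to\rho_0$, so $S'_t(z,1)$ admits a $(1-z/\rho_0)^{-1/2}$-type singularity. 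Applying Theorem~\ref{fo} with $c=-1/2$ produces~\eqref{sn'}; the constant $K$ is computed by evaluating the analytic factor of $S'_t(z,1)$ at $z=\rho_0$, using the closed form~\eqref{fz} for $F(\rho_0,1)$ simplified with the identity $R(\rho_0)=0$.

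The final step is to take the ratio. Dividing~\eqref{sn'} by~\eqref{sn} immediately yields the correct order $n$ (since $n^{-1/2}/n^{-3/2}=n$), and the gamma-function ratio $\Gamma(-1/2)/\Gamma(1/2) = -2$ supplies the numerical prefactor. The factor $\sqrt{-\rho_0 R'(\rho_0)}$ hidden in~\eqref{sn}, which expands to $\sqrt{(1-p_1q_2\rho_0)\gamma}$ via the algebraic identity $-\rho_0 R'(\rho_0) = (1-p_1q_2\rho_0)\gamma$ that follows from $R(\rho_0)=0$, cancels against the corresponding factor in $K$; one factor of $(1-p_3\rho_0^2)$ also cancels, leaving precisely the stated closed form. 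The main obstacle is strictly arithmetic: keeping careful track of several signs (from~\eqref{sn}, from $\Gamma(-1/2)<0$, and from $R'(\rho_0)<0$) and of the various square-root factors so that the cancellations come out correctly. No new analytic idea is needed, so the proof is essentially a verification.
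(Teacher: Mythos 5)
Your proposal follows the paper's own route exactly: derive the singular expansion of $S'_t(z,1)$ at $\rho_0$ of type $(1-z/\rho_0)^{-1/2}$, apply Theorem~\ref{fo} to obtain~\eqref{sn'} (whose printed exponent $n^{1/2}$ should indeed read $n^{-1/2}$, as you have it), and divide by~\eqref{sn}, using $R(\rho_0)=0$ and the identity $-\rho_0 R'(\rho_0)=(1-p_1q_2\rho_0)\gamma$ to simplify. The cancellations you describe check out, so this is essentially the same proof.
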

 \begin{proof} The estimate follows from~\eqref{sn'},~\eqref{sn}, and $\mathbb{E}(\mathbb{X}_n^{m,r}) =\frac{[z^n]S'_t(z,1)}{[z^n]S(z,1)}.$
 \end{proof}

\subsection{External loop}~\label{externalloop} In this subsection we analyze the branchings of the external loop and the 5'-3' distance. The 5'-3' distance is defined as the number of nucleotides (paired or single-stranded) enclosed in the external loop minus one. Let $u$ be the variable that marks the number of helices in the external loop, and let $v$ mark the 5'-3' distance.
The total contribution  of all  secondary structures with no base pairs in $S(z,u,v)$ is
\[\sum_{n \geq 1} P(S  \ras s^n) = \sum_{n \geq 1} p_1^{n-1} q_1 q_2^n z^n v^{n-1} = \frac{q_1 q_2z}{1-p_1q_2zv}.\]
All other structures have $r \geq 1$ helices in the external loop. Since
\[P( S \ras s^{k_0} dFd' s^{k_1} \cdots dFd' s^{k_r}) = p_1^{r-1+\sum_{i=0}^{r} {k_i} } p_2^ {r} q_1 q_2^{\sum_{i=0}^{r} {k_i}},\]
the generating function of all structures that have exactly $r$ helices in the external loop is given by
\begin{equation*} \sum_{k_0, k_1, \dots, k_r \geq 0} p_1^{r-1+\sum_{i=0}^{r} {k_i} } p_2^ {r} q_1 q_2^{\sum_{i=0}^{r} {k_i}} z^{2r+\sum_{i=0}^{r} {k_i}}u^r v^{2r-1+\sum_{i=0}^{r} {k_i}}F^r(z)  \end{equation*}
which is equal to $\frac{p_1^{r-1}p_2^r q_1z^{2r} u^r v^{2r-1}F^r(z)}{(1-p_1q_2zv)^{r+1}}.$
Therefore $S(z,u,v)$ is given by
\begin{eqnarray*}S(z,u,v) &=& \frac{q_1 q_2z}{1-p_1q_2zv} + \sum_{r \geq 1}\frac{p_1^{r-1}p_2^r q_1z^{2r} u^r v^{2r-1} F^r(z)}{(1-p_1q_2z v)^{r+1}} \\
&=&  \frac{q_1 q_2z}{1-p_1q_2zv} + \frac{p_2q_1z^2uvF(z)}{(1-p_1q_2zv)(1-p_1q_2zv-p_1p_2z^2uv^2F(z))}.\end{eqnarray*}
To compute the expected number of helices in the external loops we will need to look at the behavior of $S'_u(z,1,1)$ around its dominant singularity. We find that
\[S'_u(z,1,1) = \frac{p_2 q_1 z^2  F(z)}{(1-p_1q_2z-p_1p_2z^2F(z))^2}. \]
Using~\eqref{fz}, one can show that $1-p_1q_2z-p_1p_2z^2F(z) \neq 0$, and so the dominant singularity of $S'_u(z,1,1)$ is the same as the dominant singularity of $F(z)$, which was found to be $\rho_0$. 
After simplifications of the expansion of  $S'_u(z,1,1)$, we get that as $z \rightarrow \rho_0$,
\begin{equation} \label{su'} S'_u(z,1,1) \sim -\frac{(1+2p_1q_2\rho_0)\sqrt{-\rho_0 R'(\rho_0)(1-p_3\rho_0^2)}} {2p_1q_3\rho_0^2}\left(1-\frac{z}{\rho_0}\right)^{1/2}.\end{equation}

 \begin{theorem} Let $\mathbb{X}_n^{eh}$ be a random variable counting the number of helices in the external loop in a secondary structure with $n$ nucleotides and let $\mathbb{X}_n^{ecd}$ count the 5'-3' distance. If the probabilities $p_i, q_i$ are all non-zero, then
  \begin{eqnarray*} \mathbb{E}(\mathbb{X}_n^{eh}) &\sim& 1+2p_1q_2\rho_0 \hspace{2cm} \text{and} \\ 
  \mathbb{E}(\mathbb{X}_n^{ecd}) &\sim& \frac{1+5p_1q_2\rho_0-2p_1^2q_2^2\rho_0^2}{1-p_1q_2\rho_0}.
  \end{eqnarray*}
  
 \end{theorem}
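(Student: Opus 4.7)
The plan is to compute both expectations as the ratio $\mathbb{E}(\mathbb{X}_n^{\star}) = [z^n]S'_{\star}(z,1,1) \big/ [z^n]S(z,1,1)$ and apply singularity analysis (Theorem~\ref{fo}) to numerator and denominator separately. The denominator asymptotics are already recorded in~\eqref{sn}, so the work is on the numerators. For $\mathbb{X}_n^{eh}$, the expansion~\eqref{su'} of $S'_u(z,1,1)$ near $\rho_0$ has been established in the preceding computation; it is a pure square-root singularity with singular exponent $c=1/2$. Theorem~\ref{fo} then gives
\[ [z^n]S'_u(z,1,1) \sim \frac{K}{\Gamma(-1/2)}\,n^{-3/2}\rho_0^{-n}, \]
and dividing by~\eqref{sn} produces a ratio in which the common factors $n^{-3/2}\rho_0^{-n}/\Gamma(-1/2)$ cancel. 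What remains is an algebraic identity that collapses to $1+2p_1q_2\rho_0$.

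The key simplification I would carry out first is the identity
\[ -\rho_0\,R'(\rho_0) = (1-p_1q_2\rho_0)\,\gamma, \]
where $\gamma = 3-p_1q_2\rho_0-p_3\rho_0^2-p_1p_3q_2\rho_0^3$. This follows from the formula for $R'(z)$ recorded in the proof of Lemma~\ref{lm} combined with the defining relation $R(\rho_0)=0$, i.e.\ $4p_2q_1q_2q_3\rho_0^3 = (1-p_1q_2\rho_0)^2(1-p_3\rho_0^2)$, which lets one replace the constant term $-12p_2q_1q_2q_3\rho_0^2$ in $R'(\rho_0)$ by $-3(1-p_1q_2\rho_0)^2(1-p_3\rho_0^2)/\rho_0$ and then factor out $\alpha=1-p_1q_2\rho_0$. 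Substituting this identity into the square root $\sqrt{-\rho_0R'(\rho_0)(1-p_3\rho_0^2)}$ that appears in the coefficient $K$ of~\eqref{su'} converts it to $\sqrt{\alpha\beta\gamma}$, which is precisely the factor appearing in~\eqref{sn}. After cancellation, the ratio reduces cleanly to $1+2p_1q_2\rho_0$.

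For the 5'-3' distance $\mathbb{X}_n^{ecd}$, the plan is parallel. I would start from the closed form
\[ S(z,u,v) = \frac{q_1q_2z}{1-p_1q_2zv} + \frac{p_2q_1z^2uvF(z)}{(1-p_1q_2zv)\bigl(1-p_1q_2zv - p_1p_2z^2uv^2F(z)\bigr)}, \]
differentiate with respect to $v$ at $u=v=1$, and collect terms. The result is a rational function in $z$ and $F(z)$ whose only singularity inside the relevant disk is the square-root singularity of $F(z)$ at $\rho_0$, since $1-p_1q_2\rho_0\neq 0$ (Lemma~\ref{lm}) and $1-p_1q_2z-p_1p_2z^2F(z)\neq 0$ at $\rho_0$ (already used in the $\mathbb{X}_n^{eh}$ argument). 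Thus $S'_v(z,1,1)$ again admits a leading singular term of the form $K_v(1-z/\rho_0)^{1/2}$, Theorem~\ref{fo} applies, and the ratio with~\eqref{sn} is formed.

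The main obstacle is purely algebraic: the numerator of $S'_v(z,1,1)$ is considerably more complicated than that of $S'_u(z,1,1)$ because differentiating with respect to $v$ produces three separate contributions (from $1/(1-p_1q_2zv)$, from the prefactor $v$, and from the inner factor $1-p_1q_2zv-p_1p_2z^2v^2F(z)$). I would evaluate each of these at $z=\rho_0$ using the explicit value of $F(\rho_0)$ derived from~\eqref{fz} — simplified via $R(\rho_0)=0$ to $F(\rho_0) = \alpha\beta(1+p_1q_2\rho_0)\big/\bigl(4p_2\rho_0^2(p_1\beta+p_2q_1q_3\rho_0^2)\bigr)$ — and then extract the coefficient of $(1-z/\rho_0)^{1/2}$ by the same $F$-linearization used to obtain~\eqref{su'}. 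After again invoking $-\rho_0R'(\rho_0)=\alpha\gamma$, the quotient collapses to $(1+5p_1q_2\rho_0-2p_1^2q_2^2\rho_0^2)/(1-p_1q_2\rho_0)$.
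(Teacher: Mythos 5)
Your proposal follows the paper's proof exactly: both compute $\mathbb{E}(\mathbb{X}_n^{eh})$ and $\mathbb{E}(\mathbb{X}_n^{ecd})$ as the coefficient ratios $[z^n]S'_u(z,1,1)/[z^n]S(z,1,1)$ and $[z^n]S'_v(z,1,1)/[z^n]S(z,1,1)$, expand $S'_u$ and $S'_v$ around the square-root singularity at $\rho_0$ inherited from $F(z)$, and apply Theorem~\ref{fo} to numerator and denominator before cancelling against~\eqref{sn}; your supporting identities $-\rho_0R'(\rho_0)=(1-p_1q_2\rho_0)\gamma$ and the stated value of $F(\rho_0)$ are correct and supply exactly the algebra the paper leaves implicit. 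One caution: the denominator of~\eqref{su'} as printed is $2p_1q_3\rho_0^2$ but should be $2p_2q_3\rho_0^2$ (so that it matches the normalization in~\eqref{sn}); only with that correction does the ratio collapse to $1+2p_1q_2\rho_0$ as you claim, rather than to $(1+2p_1q_2\rho_0)p_2/p_1$.
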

 \begin{proof}  The estimate for $\mathbb{E}(\mathbb{X}_n^{eh})$ follows from~\eqref{su'},~\eqref{sn}, Theorem~\ref{fo}, and the fact that $\mathbb{E}(\mathbb{X}_n^{eh}) =\frac{[z^n]S'_u(z,1,1)}{[z^n]S(z,1,1)}.$ For $ \mathbb{E}(\mathbb{X}_n^{ecd})$, one finds that 
  \begin{eqnarray*}S'_v(z,1,1) = \frac{p_1q_1q_2^2z^2}{1-p_1q_2z} &+& \frac{p_2 q_1 z^2 (1-p_1q_2z)(1-p_1q_2z +2p_1q_2z) F(z)}{(1-p_1q_2z)^2(1-p_1q_2z-p_1p_2z^2F(z))^2} \\&+& \frac{p_1p_2^2q_1z^4(1-2p_1q_2z)F(z)^2}{(1-p_1q_2z)^2(1-p_1q_2z-p_1p_2z^2F(z))^2}.\end{eqnarray*}
 The dominant singularity is again $\rho_0$, so one proceeds as before to obtain the estimate.
 \end{proof}
 
\subsection{The function $f$ in Theorem~\ref{mainth}} \label{funf}
In this subsection we show that the set of probabilities $(p_1, p_2, p_3)$ for which Theorem~\ref{mainth} does not apply is small in the sense that it has Lebesgue measure zero. Define $V^{bp} = V^{bp}(p_1, p_2,p_3, \rho_0)$ to be
 $$\rho_0 (C^{bp}_{1,0})^2 C^{bp}_{0,2} - 2 \rho_0 C^{bp}_{1,0} C^{bp}_{1,1} C^{bp}_{0,1} + \rho_0 C^{bp}_{2,0}(C^{bp}_{0,1})^2  + (C^{bp}_{0,1})^2 C^{bp}_{1,0} + \rho_0 C^{bp}_{0,1} (C^{bp}_{1,0})^2 \bigg|_{\substack{z=\rho_0\\ u=1}} $$
 where $C^{bp}$ is the polynomial that is defined in~\eqref{Cbp} and appears in the conditions of Theorems~\ref{bps}. Similarly define $V^{hel}, V^{hp}, V^{lb}, V^{i}, V^{m}$ which correspond to the polynomials $C^{hel}, C^{hp}, C^{lb}, C^{i}$, and $C^{m}$, which appear in the conditions of Theorems~\ref{hel} and~\ref{lps} (since $C^{lb}=C^{rb}$, we do not need to define $V^{rb}$). Finally, define $$g(p_1, p_2, p_3, \rho_0) =V^{bp} V^{h} V^{hp} V^{lb} V^{i} V^{m}.$$ Notice that Theorem~\ref{mainth} holds for all $(p_1, p_2, p_3) \in (0,1)^3$ other than those for which $g(p_1, p_2, p_3, \rho_0) =0$. Since by Lemma~\ref{lm} $\rho_0$ is a root of multiplicity one of the polynomial $R(z)$ for all $(p_1, p_2, p_3) \in (0,1)^3$ it follows that $\rho_0$ is an analytic function of $(p_1, p_2, p_3)$ and therefore $$f(p_1, p_2, p_3) = g(p_1, p_2, p_3, \rho_0 (p_1, p_2, p_3))$$ is also analytic on $(0,1)^3$. This implies that its zero set must be of measure zero and hence the central limit results hold for almost all choices of the grammar probabilities.

\section{Discussion} \label{discussion}
Recall that $\mathbb{X}_n^{lb}, \mathbb{X}_n^{rb}, \mathbb{X}_n^{m}, \mathbb{X}_n^{hel}, \mathbb{X}_n^{hp}$, and $\mathbb{X}_n^{i}$ are the number of  left bulges, right bulges, multi-branch loops, helices, hairpins, and internal loops in a random secondary structure on $n$ nucleotides, respectively, while $\mathbb{X}_n^{m, r}$ is the number of multi-branch loops of degree $r$. Since $p_1q_2\rho_0 < 1$, based on the calculated  expectations, we have the following corollary.
\begin{corollary}\label{col}If $p_i, q_i >0$
\begin{itemize}
\item [(i)] $\mathbb{E}(\mathbb{X}_n^{lb}) = \mathbb{E}(\mathbb{X}_n^{rb})$,
\item [(ii)] $\mathbb{E}(\mathbb{X}_n^m) = \frac{1}{4}\mathbb{E}(\mathbb{X}_n^{hel})(1+o(n))$,
\item [(iii)] $\mathbb{E}(\mathbb{X}_n^{hp})=(\mathbb{E}(\mathbb{X}_n^{i}) + \mathbb{E}(\mathbb{X}_n^{m})) (1+ o(n))$,
\item [(iv)] $\mathbb{E}(\mathbb{X}_n^{m})=(\mathbb{E}(\mathbb{X}_n^{lb}) + \mathbb{E}(\mathbb{X}_n^{i})) (1+ o(n))$,
\item [(v)] $\mathbb{E}(\mathbb{X}_n^{m, r+1}) < \frac{1}{2}\mathbb{E}(\mathbb{X}_n^{m, r})(1+o(n))$, $r \geq 2$.
\end{itemize}
\end{corollary}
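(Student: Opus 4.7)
The corollary is essentially a consequence of the explicit asymptotic expectations already computed in Theorem~\ref{lps} and the subsection on multi-branch loops with fixed degree. The plan is to treat it as a verification: for each relation (i)--(v), substitute the formulas in terms of $\alpha = 1-p_1q_2\rho_0$, $\beta = 1-p_3\rho_0^2$, and $\gamma = 3-p_1q_2\rho_0-p_3\rho_0^2-p_1p_3q_2\rho_0^3$, and check that the algebraic identity holds.

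For (i), nothing is to be done beyond citing Theorem~\ref{lps}, which already asserts $\mathbb{E}(\mathbb{X}_n^{lb}) = \mathbb{E}(\mathbb{X}_n^{rb}) \sim \alpha^2\beta n / (4\gamma)$. For (ii), I compare $\mathbb{E}(\mathbb{X}_n^m) \sim \alpha\beta n /(4\gamma)$ with $\mathbb{E}(\mathbb{X}_n^{hel}) \sim \alpha\beta n/\gamma$ and read off the factor $1/4$. For (iii), I sum $\mathbb{E}(\mathbb{X}_n^i) \sim p_1q_2\rho_0 \,\alpha\beta n/(4\gamma)$ and $\mathbb{E}(\mathbb{X}_n^m) \sim \alpha\beta n/(4\gamma)$ to get $(1+p_1q_2\rho_0)\alpha\beta n/(4\gamma)$, which matches $\mathbb{E}(\mathbb{X}_n^{hp})$. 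For (iv), the key algebraic identity is that $\mathbb{E}(\mathbb{X}_n^{lb}) + \mathbb{E}(\mathbb{X}_n^i)$ has leading coefficient proportional to $\alpha^2 + p_1q_2\rho_0\,\alpha = \alpha(\alpha + p_1q_2\rho_0) = \alpha \cdot 1 = \alpha$, using $\alpha + p_1q_2\rho_0 = 1$; this matches the coefficient for $\mathbb{E}(\mathbb{X}_n^m)$.

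For (v), I use the explicit formula
\[
\mathbb{E}(\mathbb{X}_n^{m,r}) \sim \frac{p_1^{r-2}q_2^{r-2}\rho_0^{r-2}(1-p_1q_2\rho_0)(1-p_3\rho_0^2)}{4(1+p_1q_2\rho_0)^{r-1}\gamma}\, n
\]
from the subsection on multi-branch loops with fixed degree. Taking the ratio gives
\[
\frac{\mathbb{E}(\mathbb{X}_n^{m,r+1})}{\mathbb{E}(\mathbb{X}_n^{m,r})} \longrightarrow \frac{p_1q_2\rho_0}{1+p_1q_2\rho_0},
\]
so the desired inequality $\mathbb{E}(\mathbb{X}_n^{m,r+1}) < \tfrac{1}{2}\mathbb{E}(\mathbb{X}_n^{m,r})(1+o(n))$ reduces to $p_1q_2\rho_0/(1+p_1q_2\rho_0) < 1/2$, equivalently $p_1q_2\rho_0 < 1$. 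This is precisely the bound $\rho_0 < 1/(p_1q_2)$ from Lemma~\ref{lm}.

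The main (and only nontrivial) step is therefore the use of $\alpha + p_1q_2\rho_0 = 1$ in item (iv) and the invocation of Lemma~\ref{lm} in item (v); everything else is direct substitution of formulas already derived. I would present the proof in one compact block, writing the five checks in the same order as the statement and pointing explicitly to the two facts ($\alpha = 1 - p_1q_2\rho_0$ and $p_1q_2\rho_0 < 1$) that do the work.
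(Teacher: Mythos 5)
Your proposal is correct and matches the paper's own (very terse) justification exactly: the paper introduces the corollary with ``Since $p_1q_2\rho_0 < 1$, based on the calculated expectations, we have the following corollary,'' which is precisely your verification by substitution of the formulas from Theorems~\ref{hel} and~\ref{lps} and Section~\ref{fixedbranching}, using $\alpha + p_1q_2\rho_0 = 1$ for (iv) and $\rho_0 < 1/(p_1q_2)$ from Lemma~\ref{lm} for (v). Your write-up simply makes explicit the two algebraic facts the paper leaves implicit.
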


Note that these relations hold even for the probabilities for which the function $f$ discussed in Section~\ref{funf} is zero. Namely, the means in those cases can be computed using Theorem~\ref{fo} and calculations similar to the ones in Sections~\ref{fixedbranching} and~\ref{externalloop}. The asymptotic formulas for the expected number of base pairs, helices, and loops remain the same as for the generic probabilities.




When an SCFG for RNA secondary structure prediction is constructed, the goal is to adequately describe the objects of interest, in this case the native RNA secondary structures. The default parameters used in Pfold were obtained by an expectation maximization procedure on a training set of tRNA and large subunit ribosomal RNA secondary structures~\citep{SCFG}. The transmission probabilities  are $$p_1=0.868534,  p_2=0.105397, p_3=0.787640,$$ $$q_1=0.131466, q_2=0.894603,  q_3=0.212360$$ and the emission probabilities are given in Table~\ref{emission}. 
\begin{table}[hbt!]
\begin{tabular}{c| c c c c}
& A & U & G &C \\
\hline
A & 0.001167 & 0.177977 & 0.001058 & 0.001806 \\
U & 0.177977 & 0.002793 & 0.049043 & 0.000763 \\
G & 0.001058 & 0.049043 & 0.000406 & 0.266974 \\
C & 0.001806 & 0.000763 & 0.266974 & 0.000391
\end{tabular}
\hspace{1.5cm}
\begin{tabular}{c|c}
A & 0.364097\\
U & 0.273013\\
G & 0.211881\\
C & 0.151009
\end{tabular}
\caption{The default paired and unpaired probabilities used in Pfold}
\label{emission}
\end{table}

These probabilities generate a distribution that should describe the training set as a whole. So, a natural question is how well it describes particular classes of RNA that may or may not have been used in the training.  Ideally, the native structures should be the likely ones among all the possible secondary structures and the average number of motifs observed should be close to the expectation given by the model. 

To compare known RNA structures to the asymptotic expected distributions from the model, we downloaded all 854 5S, 16S, and 23S ribosomal structures from the Comparative RNA website~\citep{CRW} for which the secondary structure (without pseudoknots) has been determined by covarying sequence analysis and given in a .ct file. Out of those we selected the structures which do not have ambiguous nucleotides and this left us with a final  set of 400 structures. From these we selected 5 sets of sequences, for each of which the variance in length is small. Each of the five sets consists of sequences of the same type with approximately the same secondary structure. Their composition is given in Table~\ref{fivesets}. The average numbers of various motifs and their standard deviations for the comparative structures  of the sequences in each set are given in Table~\ref{avstdev} and Table~\ref{loopsavstdev} in the rows labeled CRW.

\begin{table} [t!]
\begin{center}
\begin{tabular} {|l|c|c|c|} 
\hline
& No. Sequences (Type) &	Av. Length &	St. Dev. Length 	\\
\hline
\hline
Set I & 122 (5S) &	121.17 & 3.1 	\\
\hline
Set II & 37 (16S) & 956.46 & 6.51 \\	
\hline
Set III & 81 (16S)  & 1521.33 & 24.86 \\
\hline
Set IV & 50 (16S)& 1787.1 & 20.9 \\
\hline
Set V & 34 (23S) & 2912.85 & 23.08\\
\hline
\end{tabular}
\caption{Five sets of RNA sequences chosen from the CRW database to minimize variance in sequence length}
 \label{fivesets}
 \end{center}
\end{table}

\begin{table}[b!]
\begin{center}
\begin{tabular}{|c|c|c|c|}
\cline{3-4}
\multicolumn{2}{c}{} & \multicolumn{2}{|c|}{Averages and Standard Deviations} \\
\cline{3-4}
\multicolumn{2}{c}{} &   \multicolumn{1}{|c|}{BP}  & Hel  \\
\cline{3-4} \hline
\multirow{3}{*}{\begin{sideways}Set I \end{sideways}}& CRW  & 38.00 $\pm$ 2.13 & 7.96 $\pm$ 0.39  \\
    \cline{2-4}
                         &CYK & 34.05 $\pm$ 4.13 & 5.64 $\pm$ 0.76  \\
                          \cline{2-4}
                         &Model & 32.83 $\pm$ 16.90 & 6.96 $\pm$ 1.53 \\
    \hline  \hline
\multirow{3}{*}{\begin{sideways}Set II \end{sideways}}& CRW & 254.65 $\pm$ 2.38 & 65.84 $\pm$ 0.65  \\
    \cline{2-4}
                         &CYK & 206.08 $\pm$ 10.31 & 33.65 $\pm$ 2.51 \\
                          \cline{2-4}
                         &Model & 259.12 $\pm$ 47.47 & 54.96 $\pm$ 4.29 \\
    \hline \hline
\multirow{3}{*}{\begin{sideways}Set III \end{sideways}}& CRW & 457.54 $\pm$ 9.68 & 104.16 $\pm$ 3.08 \\
    \cline{2-4}
                         &CYK & 423.69 $\pm$ 23.58 & 68.76 $\pm$ 4.47 \\
                          \cline{2-4}
                         &Model & 412.16 $\pm$ 59.87 & 87.41 $\pm$ 5.41 \\
    \hline \hline
\multirow{3}{*}{\begin{sideways}Set IV \end{sideways}}& CRW & 485.84 $\pm$ 10.50 & 112.9 $\pm$ 3.62  \\
    \cline{2-4}
                         &CYK & 493.12 $\pm$ 16.73 & 76.00 $\pm$ 3.17 \\
                          \cline{2-4}
                         &Model & 484.16 $\pm$ 64.89 & 102.68 $\pm$ 5.86 \\
    \hline \hline
\multirow{3}{*}{\begin{sideways}Set V \end{sideways}}& CRW & 837.94 $\pm$ 8.99 & 199.18 $\pm$ 2.91 \\
    \cline{2-4}
                         &CYK &  795.68 $\pm$ 29.01 &  131.91 $\pm$ 7.23 \\
                          \cline{2-4}
                         &Model & 787.79 $\pm$ 82.84& 167.37 $\pm$ 7.48 \\
    \hline
     \end{tabular}
   \caption{Average number of base pairs and helices in structures for the sequences in the five sets. For each set,  the structures from the CRW database and the structures predicted using the CYK algorithm are considered. In addition, the expected number of motifs for sequences of the same length is given for the model using the default Pfold probabilities, where $n$ is taken to be the average length of the sequences in the corresponding set}
   \label{avstdev}
   \end{center}
\end{table}

Our goal now is to see how well the Knuden-Hein grammar can describe the ribosomal structures. To that end we folded each of these sequences using our implementation of the Cocke-Younger-Kasami (CYK) algorithm and the default Pfold probabilities for all the sequences in our five sets. The results of the CYK parsing are also displayed in Table~\ref{avstdev} and Table~\ref{loopsavstdev}. 

We remark that the sequences were folded using the CYK algorithm, which computes the most probable structure for a given sequence. In contrast, the Pfold program predicts the structure with the highest expected number of correctly predicted positions. As a consequence, such predicted structures have very few base pairs. For example, the structures obtained by using PPfold~\citep{PPfold} (a parallelized version of Pfold) for the sequences in Set~I have on average 15.78 base pairs. This is because Pfold has been designed to 
be used primarily for finding a consensus structure for a set of aligned sequences and this is where its strength lies.

In addition, Tables~\ref{avstdev} and~\ref{loopsavstdev} include the expectations that correspond to the average sequence lengths of  our  five sets and the default Pfold parameters. It is not surprising that  the CYK predictions do not agree with the model since our results describe the distributions of the motifs in structures over random sequences, while the biological sequences are not random. Nonetheless, we observe that the average number of base pairs in the CRW structures is within one standard deviation of the model mean. This is also true for most of the CYK structures. So, we conclude that the grammar describes well the number of base pairs in the ribosomal structures.
\begin{table}[htb!]
\begin{center}
\begin{tabular}{|c|c|c|c|c|c|c|}
\cline{3-7}
\multicolumn{2}{c}{} & \multicolumn{5}{|c|}{Averages and Standard Deviations} \\
\cline{3-7}
\multicolumn{2}{c}{} &   \multicolumn{1}{|c|} {ML} & IL & LB & RB &HL \\
\cline{3-7} \hline
\multirow{3}{*}{\begin{sideways}Set I \end{sideways}}& CRW  & 1.00 $\pm$ 0.00 & 2.02 $\pm$ 0.13 & 1.99 $\pm$ 0.20 & 0.95 $\pm$ 0.31 & 2.00 $\pm$ 0.00 \\
    \cline{2-7}
                         &CYK & 1.07 $\pm$ 0.47 & 1.54 $\pm$ 0.81  & 0.39 $\pm$ 0.55 & 0.30 $\pm$ 0.49 & 2.34 $\pm$ 0.71 \\
                          \cline{2-7}
                         &Model  & 1.74 $\pm$ 0.76 & 1.35 $\pm$ 1.29 & 0.39 $\pm$ 0.65 & 0.39 $\pm$ 0.65 & 3.09 $\pm$ 1.33\\
    \hline  \hline
\multirow{3}{*}{\begin{sideways}Set II \end{sideways}}& CRW  & 10.00 $\pm$ 0.00 & 18.57 $\pm$ 0.50 & 7.24 $\pm$ 0.43 & 9.03 $\pm$ 0.16 & 21.00 $\pm$ 0.00 \\
    \cline{2-7}
                         &CYK & 8.84 $\pm$ 1.54 & 9.22 $\pm$ 2.50 & 1.76 $\pm$ 1.01 & 0.89 $\pm$ 0.94 & 15.95 $\pm$ 2.03\\
                          \cline{2-7}
                         &Model & 13.74 $\pm$ 2.14 & 10.68 $\pm$ 3.63 & 3.06 $\pm$ 1.83 & 3.06 $\pm$ 1.83 & 24.42 $\pm$ 3.73\\
    \hline \hline
\multirow{3}{*}{\begin{sideways}Set III \end{sideways}}& CRW & 16.86 $\pm$ 0.38 & 27.86 $\pm$ 1.12 & 9.49 $\pm$ 1.06 & 18.07 $\pm$ 1.29 & 31.86 $\pm$ 0.38\\
    \cline{2-7}
                         &CYK  & 17.65 $\pm$ 2.03 & 15.99 $\pm$ 4.03 & 2.98 $\pm$ 1.70 & 2.86 $\pm$ 1.79 & 29.27 $\pm$ 3.04 \\
                          \cline{2-7}
                         &Model  & 21.85 $\pm$ 2.70 & 16.98 $\pm$ 4.56 & 4.87 $\pm$ 2.31 & 4.87 $\pm$ 2.31 & 38.84 $\pm$ 4.71\\
    \hline \hline
\multirow{3}{*}{\begin{sideways}Set IV \end{sideways}}& CRW  & 16.00 $\pm$ 0.20 & 33.68 $\pm$ 3.20 & 11.86 $\pm$ 1.14 & 18.38 $\pm$ 1.26 & 32.98 $\pm$ 0.32 \\
    \cline{2-7}
                         &CYK  & 19.06 $\pm$ 2.12 & 16.02 $\pm$ 3.07 & 4.38 $\pm$ 1.93& 4.22 $\pm$ 1.66 & 32.32 $\pm$ 2.77\\
                          \cline{2-7}
                         &Model  & 25.67 $\pm$ 2.93 & 19.95 $\pm$ 4.97 & 5.72 $\pm$ 2.50 & 5.72 $\pm$ 2.50 & 45.62 $\pm$ 5.10\\
    \hline \hline
\multirow{3}{*}{\begin{sideways}Set V \end{sideways}}& CRW  & 33.65 $\pm$ 1.30 & 53.15 $\pm$ 2.02 & 23.5 $\pm$ 1.33 & 18.41 $\pm$ 1.81 & 70.47 $\pm$ 0.71\\
    \cline{2-7}
                         &CYK &  35.97 $\pm$ 3.11&  25.41 $\pm$ 5.08&  6.59 $\pm$ 2.18&  6.74 $\pm$ 2.34&  57.21 $\pm$ 4.14\\
                          \cline{2-7}
                         &Model  & 41.84 $\pm$ 3.74 & 32.52 $\pm$ 6.34 & 9.32 $\pm$ 3.20 & 9.32 $\pm$ 3.20& 74.36 $\pm$ 6.51\\
    \hline
\end{tabular}
   \caption{Average number of loops in structures for the CRW and CYK structures for the sequences in the five sets. The model averages are computed using the default Pfold probabilities and  $n$ is taken to be the average length of the sequences in the corresponding set}
   \label{loopsavstdev}
   \end{center}
\end{table}

However, the way these base pairs are arranged  in the CRW  and CYK structures is noticeably different. The CYK structures have fewer helices that are longer on average as can be seen from Table~\ref{avstdev}. On the other hand, the CRW structures have shorter helices separated by internal loops and bulges, which form stable stems, while branching is less favored. Namely, it can be seen from the ratios in the last two columns of Table~\ref{ratios} that internal loops and bulges occur more frequently relatively to the multi-branch loops in the CRW structures than in the CYK structures. 

\begin{table}[hbt!]
\begin{center}
{
\renewcommand{\arraystretch}{1.3}
\begin{tabular}{|c|c|c|c|c|c|c|}
\cline{3-7}
\multicolumn{2}{c}{} & \multicolumn{5}{|c|}{Ratios of Averages} \\
\cline{3-7}
\multicolumn{2}{c}{} &   \multicolumn{1}{|c|} {$\frac{\text{RB}}{\text{LB}}$} &  $\frac{\text{Hel}}{\text{ML}}$& $\frac{\text{IL} + \text{ML}}{\text{HP}}$ & $\frac{\text{IL} + \text{LB}}{\text{ML}}$ &$\frac{\text{IL} + \text{RB}}{\text{ML}}$ \\
\cline{3-7} \hline
\multirow{2}{*}{\begin{sideways}Set I \end{sideways}}& CRW  & 0.48 & 7.96 & 1.51 & 4.01 & 2.97\\
    \cline{2-7}
                         &CYK & 0.77 & 5.27 & 1.12 & 1.80 & 1.72 \\
    \hline  \hline
\multirow{2}{*}{\begin{sideways}Set II \end{sideways}}& CRW  & 1.25 & 6.58 & 1.36 & 2.58 &2.76\\
    \cline{2-7}
                         &CYK & 0.50 &  3.80 & 1.13 & 1.24 & 1.14\\    \hline \hline
\multirow{2}{*}{\begin{sideways}Set III \end{sideways}}& CRW & 2.50 & 6.18 & 1.40 & 2.22 & 2.72 \\
    \cline{2-7}
                         &CYK  & 0.96 & 3.90 & 1.15 & 1.07 & 1.07 \\
                             \hline \hline
\multirow{2}{*}{\begin{sideways}Set IV \end{sideways}}& CRW  &  1.93 & 7.06 &1.51 & 2.85 & 3.25 \\
    \cline{2-7}
                         &CYK  & 0.96 & 3.99 & 1.09 & 1.07 & 1.06\\
                             \hline \hline
\multirow{2}{*}{\begin{sideways}Set V \end{sideways}}& CRW  & 0.78 & 5.92 & 1.23 & 2.28 & 2.13\\
    \cline{2-7}
                         &CYK &  1.02 & 3.67 & 1.07 & 0.89 & 0.89\\
                            \hline \hline
                            
                   \multicolumn{2}{|c|}{Model} & 1 & 4& 1& 1& 1\\
                            \hline
    \end{tabular}
   \caption{Ratios of the average number of occurrences of various motifs for the native and predicted structures from the five sets. The last row contains the asymptotic model averages as given by Corollary~\ref{col} }
   \label{ratios}
 }
   \end{center}
\end{table}

The ratios given in Table~\ref{ratios} clearly indicate the differences between the CRW and the CYK structures, as well as the agreement of the CYK structures with the model averages.  For example, CYK predicts approximately the same number of left and right bulges, while they occur with different frequencies in the CRW structures. The  agreement between the ratios for the CYK and the model is especially striking for the Sets~III-V, which contain longer sequences, and is more expected because our results are asymptotic. This suggests that even though the grammar probabilities can be adjusted to, say,  increase the number of helices in the CYK structures, the relative frequencies of the loops  in the CYK structures will remain close to the model predictions, which are independent of the parameters. Therefore, we expect that the change of the grammar probabilities will not improve the CYK prediction of  structures for the sequences in the Sets III-V significantly. Given that the CRW structures for these sequences are long and complex, it would be interesting to see whether there are grammars which  reflect their branching behavior more closely, while still being simple enough for computational purposes.

\section*{Acknowledgements}

The authors would like to thank Christian Reidys for  useful comments on an earlier version of these results and David Esposito for implementing the CYK parsing and running the predictions.

\bibliographystyle{plainnat} 
\bibliography{scfg.bib}


\end{document}